\theoremstyle{plain}
\newtheorem{theorem}{Theorem}[section]
\newtheorem{property}[theorem]{Property}
\theoremstyle{definition}
\theoremstyle{remark}
\definecolor{vir0}{RGB}{68,4,90}
\definecolor{vir1}{RGB}{65,62,133}
\definecolor{vir2}{RGB}{48,104,141}
\definecolor{vir3}{RGB}{31,146,139}
\definecolor{vir4}{RGB}{53,183,119}
\definecolor{vir5}{RGB}{145,213,66}
\definecolor{vir6}{RGB}{248,230,32}
\title{Neural Preconditioning Operator for \\Efficient PDE Solves}
\author{
Zhihao Li\\
The Hong Kong University \\of Science and Technology (Guangzhou)\\
\texttt{zli416@connect.hkust-gz.edu.cn}\\
\And
Di Xiao\\
The Hong Kong University \\of Science and Technology (Guangzhou)\\
\texttt{shawd18376025@gmail.com}
\And
Zhilu Lai\\
The Hong Kong University \\of Science and Technology (Guangzhou)\\
\texttt{zhilulai@ust.hk}\\
\And
Wei Wang\thanks{Corresponding author}\\
The Hong Kong University \\of Science and Technology (Guangzhou)\\
\texttt{weiwcs@ust.hk}\\
}
\begin{document}

\maketitle

\begin{abstract}
We introduce the Neural Preconditioning Operator (NPO), a novel approach designed to accelerate Krylov solvers in solving large, sparse linear systems derived from partial differential equations (PDEs). Unlike classical preconditioners that often require extensive tuning and struggle to generalize across different meshes or parameters, NPO employs neural operators trained via condition and residual losses. This framework seamlessly integrates with existing neural network models, serving effectively as a preconditioner to enhance the performance of Krylov subspace methods. Further, by melding algebraic multigrid principles with a transformer-based architecture, NPO significantly reduces iteration counts and runtime for solving Poisson, Diffusion, and Linear Elasticity problems on both uniform and irregular meshes. Our extensive numerical experiments demonstrate that NPO outperforms traditional methods and contemporary neural approaches across various resolutions, ensuring robust convergence even on grids as large as \(4096\), far exceeding its initial training limits. These findings underscore the potential of data-driven preconditioning to transform the computational efficiency of high-dimensional PDE applications.
\end{abstract}

\section{Introduction}
Partial differential equations (PDEs) are widely used to model phenomena such as heat transfer, electrostatics, and quantum mechanics \cite{82:AnIntro,87:Levy,02:PDE,20:Turbulence}, and are typically solved by discretizing them into large, sparse linear systems \(A x = b\) \cite{77:Numerical}. Krylov subspace methods efficiently handle these systems by projecting onto lower-dimensional subspaces \cite{07:Computational}, yet parametric PDEs and high-resolution meshes often produce ill-conditioned matrices with entries spanning multiple scales, posing considerable computational challenges. Preconditioning is therefore critical for accelerating convergence by improving the system’s spectral properties. Classical techniques like Jacobi, Gauss--Seidel, or multigrid \cite{00:tutorial} can substantially reduce iteration counts, but they frequently require problem-specific tuning and may fail to generalize across changing PDE parameters or domain discretizations.

Neural operators \cite{21:deeponet,21:fno,21:Choose} have recently emerged as a powerful tool for mapping parametric inputs to PDE solutions, learning universal operators directly from data. Building on these developments, researchers have begun leveraging neural operator approaches to speed up classical numerical methods by training data-driven preconditioners \cite{19:LearningNeural,23:LearningPre,24:fcg-no}. While this paradigm offers promising adaptability and the potential to handle complex PDE parameters, it also presents new challenges: extensive training data may be required, large-scale systems often demand specialized architectures, and rigorous theoretical guarantees for convergence—especially under changing boundary conditions or mesh refinements—remain an open problem.

Motivated by these challenges, we introduce a new \emph{Neural Preconditioning Operator} (NPO) framework designed to address the difficulties of data collection, loss function design, and large-scale PDE solving. Our approach first generates datasets by discretizing PDEs, then collects solution and residual information under varying system matrices. We define two key losses—\emph{condition loss} and \emph{residual loss}—to guide the neural operator toward learning effective preconditioning strategies for Krylov subspace methods. Building on these foundations, we propose a \emph{Neural Algebraic Multigrid} operator that blends algebraic multigrid principles with a transformer-based architecture. By exploiting the hierarchical structure of multigrid, the NPO accelerates error reduction across multiple scales while retaining the adaptability inherent in data-driven models. As we show through comprehensive experiments, NPO outperforms both classical and existing neural preconditioners, achieving superior convergence rates and generalizing effectively across different PDE domains and resolutions. Our contributions are as follows:

\begin{itemize}
    \item \textbf{Neural Preconditioning Operator (NPO) Framework.} We propose a framework encompassing data generation, neural operator training (via condition and residual losses), and integration with Krylov subspace solvers, ultimately enabling more efficient and robust PDE solves across varying mesh types and dimensions. Additionally, we provide theoretical analysis to show that our method inherits strong convergence guarantees from classical multigrid, ensuring rapid and stable error reduction.

    \item \textbf{Neural Algebraic Multigrid (NAMG) Operator.} By embedding algebraic multigrid principles within a transformer-based network, we accelerate multiscale error elimination while preserving the adaptability of a data-driven model. The theoretical foundations further confirm that the learned operators cluster eigenvalues around 1, thereby improving the system’s spectral properties.

    \item \textbf{Comprehensive Evaluation.} We benchmark NPO on diverse PDEs and mesh resolutions, demonstrating faster convergence and broader generalization compared to classical and existing neural approaches. Our experiments validate that the learned preconditioners consistently reduce iteration counts and maintain robust performance across various problem settings.
\end{itemize}

\section{Preliminaries}
In this section, we introduce the essential concepts and notations for understanding the proposed neural precondition operator. We begin with an overview of Krylov subspace methods for large linear systems, followed by a discussion of preconditioning techniques.

\subsection{Krylov Subspace Methods}
Krylov subspace iterative methods solve large-scale systems of the form
\begin{equation}
    A\mathbf{x} = \mathbf{b},
\end{equation}
where \(A \in \mathbb{R}^{n \times n}\) is sparse, \(\mathbf{x}\in \mathbb{R}^n\) is the solution, and \(\mathbf{b}\in \mathbb{R}^n\) is the right-hand side (RHS). Starting from an initial guess \(\mathbf{x}_0\), the residual is 
\begin{equation}
    \mathbf{r}_0 = \mathbf{b} - A\mathbf{x}_0.
\end{equation}
The \(m\)-th Krylov subspace is
\begin{equation}
    \mathcal{K}_m(A, \mathbf{r}_0) = \text{span}\{\mathbf{r}_0,\, A\mathbf{r}_0,\, A^2\mathbf{r}_0,\, \dots\}.
\end{equation}

Krylov methods (e.g., Conjugate Gradient, GMRES) construct approximate solutions \(\mathbf{x}_m \in \mathcal{K}_m\) that minimize the residual \(\|\mathbf{b} - A\mathbf{x}_m\|\) in some norm. Further details on Krylov Subspace methods can be found in Appendix~\ref{sec:krylov}.

\subsection{Preconditioner}
Preconditioning improves the convergence of Krylov methods by transforming the system \(A\mathbf{x} = \mathbf{b}\) into one with more favorable spectral properties:
\begin{equation}
    M A \mathbf{x} = M \mathbf{b},
\end{equation}
where \(M \in \mathbb{R}^{n \times n}\) approximates the inverse operator \(A^{-1}\) and is easily invertible. An effective preconditioner \(M\) should closely approximate \(A^{-1}\), thereby clustering the eigenvalues of \(M A\) and reducing the system's condition number. Additionally, \(M\) must be computationally efficient to apply, as excessive overhead can negate the benefits of accelerated convergence. The preconditioner should also scale effectively to large problem sizes, maintaining its performance as the dimension of the system increases. Finally, it should remain robust across varying problem instances, consistently providing improvements even when the parameters or structure of the system change. A comprehensive introduction to preconditioning methods, both numerical and neural, is provided in Appendix~\ref{appendix:relate}.

\subsection{Discretization}
\label{subsec:discretization}

To numerically solve PDEs, we discretize the spatial domain \(\Omega\) using finite differences \cite{07:Finite}, finite volumes \cite{00:Finite}, or finite elements \cite{12:Numerical}, thereby converting the continuous system into $A\mathbf{x} = \mathbf{b}$, where \(A\in\mathbb{R}^{n\times n}\) is a sparse, symmetric positive-definite (SPD) stiffness matrix, \(\mathbf{b}\in\mathbb{R}^n\) encodes source terms and boundary conditions, and \(\mathbf{x}\in\mathbb{R}^n\) contains the nodal approximations of the solution. The SPD property \(\mathbf{v}^\top A\,\mathbf{v}>0\) for all nonzero \(\mathbf{v}\) ensures well-posedness and preserves the key operator characteristics needed for stable numerical solutions.

\subsection{Neural Operators}
We employ a neural operator approach based on the FNO framework \cite{21:fno,23:no} to learn mappings between function spaces. Concretely, let \(D \subset \mathbb{R}^d\) be the spatial domain, and consider an operator $\mathcal{G}: \mathcal{A}(D;\mathbb{R}^{d_a}) \to \mathcal{U}(D;\mathbb{R}^{d_u})$, where \(a \in \mathcal{A}(D;\mathbb{R}^{d_a})\) and \(u \in \mathcal{U}(D;\mathbb{R}^{d_u})\) are functions on \(D\). A neural operator \(\mathcal{N}\) approximates \(\mathcal{G}\) via the composition
\begin{equation}
    \mathcal{N}(a) = \mathcal{Q}\circ\mathcal{L}_L\circ\cdots\circ\mathcal{L}_1\circ\mathcal{R}(a),
\end{equation}
where \(\mathcal{R}\) lifts \(a\) to a higher-dimensional representation, \(\{\mathcal{L}_\ell\}\) are neural operator layers, and \(\mathcal{Q}\) projects the final output to the target dimension.

\begin{figure*}
    \centering
    \includegraphics[width=\textwidth]{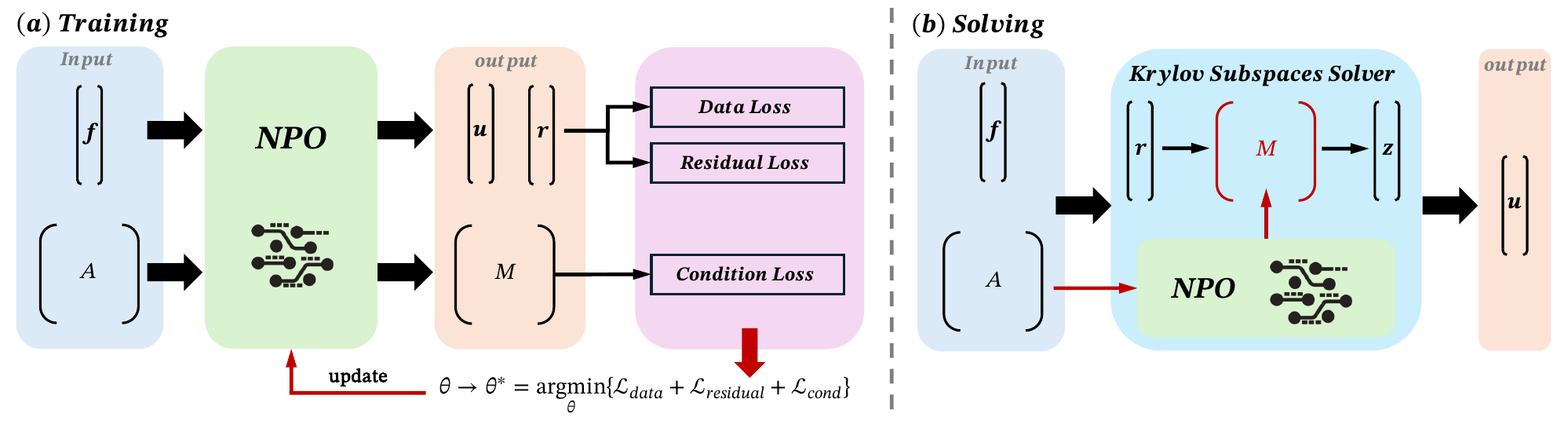}
    \vspace{-10pt}
    \caption{\textbf{Illustration of Neural Preconditioning Operator Framework.} (a) The training phase with multiple loss functions (data, residual, and condition losses) and (b) the solving phase integrated with Krylov subspace methods for efficient PDE solutions.}
    \label{fig:framework}
\end{figure*}

\section{Methodology}
This section presents our neural approach to preconditioning PDEs. We begin by formulating the problem and discretizing the governing PDEs in Section~\ref{subsec:problem_formulation}, followed by an overview of the Neural Preconditioning Operator (NPO) framework in Section~\ref{subsec:npo_framework}. Next, we define the learning objectives for training the NPO in Section~\ref{subsec:learning_npo}, and conclude with a detailed description of the Neural Algebraic Multigrid (NAMG) Operator in Section~\ref{subsec:npo_amg}, which combines classical multigrid principles with neural attention for efficient coarse-grid correction.

\subsection{Problem Formulation}
\label{subsec:problem_formulation}
We consider PDEs on a domain \(D \subset \mathbb{R}^d\), with functions from the input and solution spaces \(\mathcal{A}(D; \mathbb{R}^{d_a})\) and \(\mathcal{U}(D; \mathbb{R}^{d_u})\), respectively. The operator \(\mathcal{G}: \mathcal{A} \to \mathcal{U}\) is expressed as an integral:
\begin{equation}
    \mathcal{G}a(\mathbf{x}) = \int_{D} \kappa(\mathbf{x}, \mathbf{y}) \, a(\mathbf{y}) \, d\mathbf{y},
\end{equation}
where \(\kappa: D \times D \to \mathbb{R}\) is the kernel function.

After discretization, the PDE leads to a sparse, symmetric positive definite (SPD) matrix \(A \in \mathbb{R}^{n \times n}\) and a right-hand side vector \(\mathbf{f} \in \mathbb{R}^n\). Our goal is to learn a preconditioner \(M = \mathcal{M}_{\theta}(A, \mathbf{f})\), defined by:
\begin{equation}
    M \;=\; \mathcal{M}_{\theta}(A),
\end{equation}

where \(\theta\) are the learned parameters. The preconditioner \(M\) is trained to remain SPD and efficient to apply, improving the condition number of \(A\) and accelerating iterative solvers.

\subsection{Neural Preconditioning Operator Framework}
\label{subsec:npo_framework}
Figure~\ref{fig:framework} illustrates the two-phase workflow of our Neural Preconditioning Operator (NPO) framework, consisting of \emph{training} (Figure~\ref{fig:framework}(a)) and \emph{solving} (Figure~\ref{fig:framework}(b)). 

During the training phase, the NPO takes the system matrix \(A\) and right-hand side vector \(f\) as inputs and generates an intermediate output, including a preconditioner matrix \(M\), the solution approximation \(u\), and residual \(r\). Three loss functions are used to guide the optimization: the \emph{data loss} (from \(u\) and \(f\)), \emph{residual loss} (from \(r\)), and \emph{condition loss} (from \(M\)). The NPO's parameters \(\theta\) are updated by minimizing the sum of these losses.

Once trained, the NPO is applied in the solving phase to accelerate iterative Krylov subspace methods (e.g., CG or GMRES). Given a new system \(A\mathbf{x} = \mathbf{b}\), the solver repeatedly uses the learned \(M\) to compute preconditioned residuals \(z = M r\), significantly reducing iteration counts and improving convergence efficiency across various PDE systems and mesh types.

\subsection{Learning Neural Preconditioner Operator}
\label{subsec:learning_npo}
To train a neural preconditioner \( \mathcal{M}_{\theta}(A) \), we define two complementary loss functions: a \emph{condition loss} and a \emph{residual loss}. These losses guide the preconditioner to behave like \( A^{-1} \), improving both the spectral properties of the system and solution accuracy.

\subsubsection{Condition Loss}

A preconditioner that approximates \( A^{-1} \) should ensure that \( A \mathcal{M}_{\theta}(A) \approx I \). A natural objective is to minimize:
\begin{equation}
    \label{eq:inverse_loss}
    \bigl\| I - A\,\mathcal{M}_{\theta}(A) \bigr\|_F^2.
\end{equation}
However, directly optimizing this matrix norm is computationally infeasible for large systems. Instead, we define a condition loss over sampled residuals \(\mathbf{r}_i\) to achieve a similar effect:
\begin{equation}
    \label{eq:condition_loss}
    \min_{\theta} \frac{1}{N} \sum_{i=1}^{N} \bigl\| \bigl(I - A_i\,\mathcal{M}_{\theta}(A_i)\bigr)\,\mathbf{r}_i \bigr\|_2^2.
\end{equation}

This condition loss indirectly improves the system's spectral properties, reducing the condition number of the preconditioned matrix and thereby accelerating convergence in iterative solvers.

\subsubsection{Residual Loss}

While the condition loss ensures better spectral properties, it does not directly assess how well the preconditioner solves the system for the right-hand side \(\mathbf{b}_i\). To address this, we define a residual loss that measures the accuracy of the preconditioner when applied to \(\mathbf{b}_i\):
\begin{equation}
    \label{eq:residual_loss}
    \min_{\theta \in \Theta} 
    \frac{1}{N}
    \sum_{i=1}^{N}
    \bigl\|
       A_{i}\mathcal{M}_{\theta}(A_i)\bigl(\mathbf{b}_i\bigr)
       \;-\;
       \mathbf{b}_i
    \bigr\|_2^2.
\end{equation}

This loss encourages \( \mathcal{M}_{\theta}(A) \) to approximate \( A^{-1} \) by minimizing the discrepancy between the predicted and actual right-hand side. Together, the condition and residual losses promote a preconditioner that reduces both spectral issues and iteration counts, enabling faster and more robust convergence for a wide range of PDE systems.

\subsection{Neural Algebraic Multigrid Operator}
\label{subsec:npo_amg}

The Neural Algebraic Multigrid (NAMG) Operator enhances the classical AMG framework by introducing neural attention mechanisms for efficient feature aggregation and prolongation. The process involves three main steps: restriction, attention-based coarse-grid correction, and prolongation.

\subsubsection{Restriction and Coarse Feature Aggregation}

Given fine-grid features \( \mathbf{x}^{f} \in \mathbb{R}^{N \times C} \) and the adjacency matrix \( A \in \mathbb{R}^{N \times N} \), restriction is defined as:

\begin{equation}
    \mathbf{x}^{c} = R \mathbf{x}^{f}, \quad R = A \cdot E_{\theta},
\end{equation}

where \( E_{\theta} \) contains learned attention weights:

\begin{equation}
    e_{ji} = \frac{\exp(\mathbf{W}_{\text{coarse}} \mathbf{x}_{i}^{f} / \tau)}{\sum_{i' \in \mathcal{N}_j} A_{ji'} \exp(\mathbf{W}_{\text{coarse}} \mathbf{x}_{i'}^{f} / \tau)}.
\end{equation}

Here, \( \mathcal{N}_j \) denotes the neighbors of node \( j \), \( \mathbf{W}_{\text{coarse}} \) is a learnable weight matrix, and \( \tau \) is a scaling parameter. Coarse features are computed by aggregating fine-grid tokens using these weights.

\subsubsection{Attention-Based Coarse Correction}

The coarse-grid features are refined through self-attention. Queries, keys, and values are computed as:

\begin{equation}
    \mathbf{q} = \mathbf{W}_{q} \mathbf{x}^{c}, \quad \mathbf{k} = \mathbf{W}_{k} \mathbf{x}^{c}, \quad \mathbf{v} = \mathbf{W}_{v} \mathbf{x}^{c}.
\end{equation}

Attention scores are then used to update the coarse features:

\begin{equation}
    \mathbf{x}_{j}^{c, \text{updated}} = \sum_{k} \text{softmax}\left( \frac{\mathbf{q}_{j} \cdot \mathbf{k}_{k}^{\top}}{\sqrt{C}} \right) \mathbf{v}_{k}.
\end{equation}

\subsubsection{Prolongation and Fine-Grid Correction}

The updated coarse features are projected back to the fine grid:

\begin{equation}
    \mathbf{x}'^{f} = \mathbf{x}^{f} + P \mathbf{x}'^{c}, \quad P = A \cdot E_{\theta}^{\top}.
\end{equation}

This process dynamically adjusts restriction and prolongation through learned attention, allowing the operator to capture complex patterns inherent in PDEs across diverse domains. 

\begin{figure}
    \centering
    \includegraphics[width=0.6\linewidth]{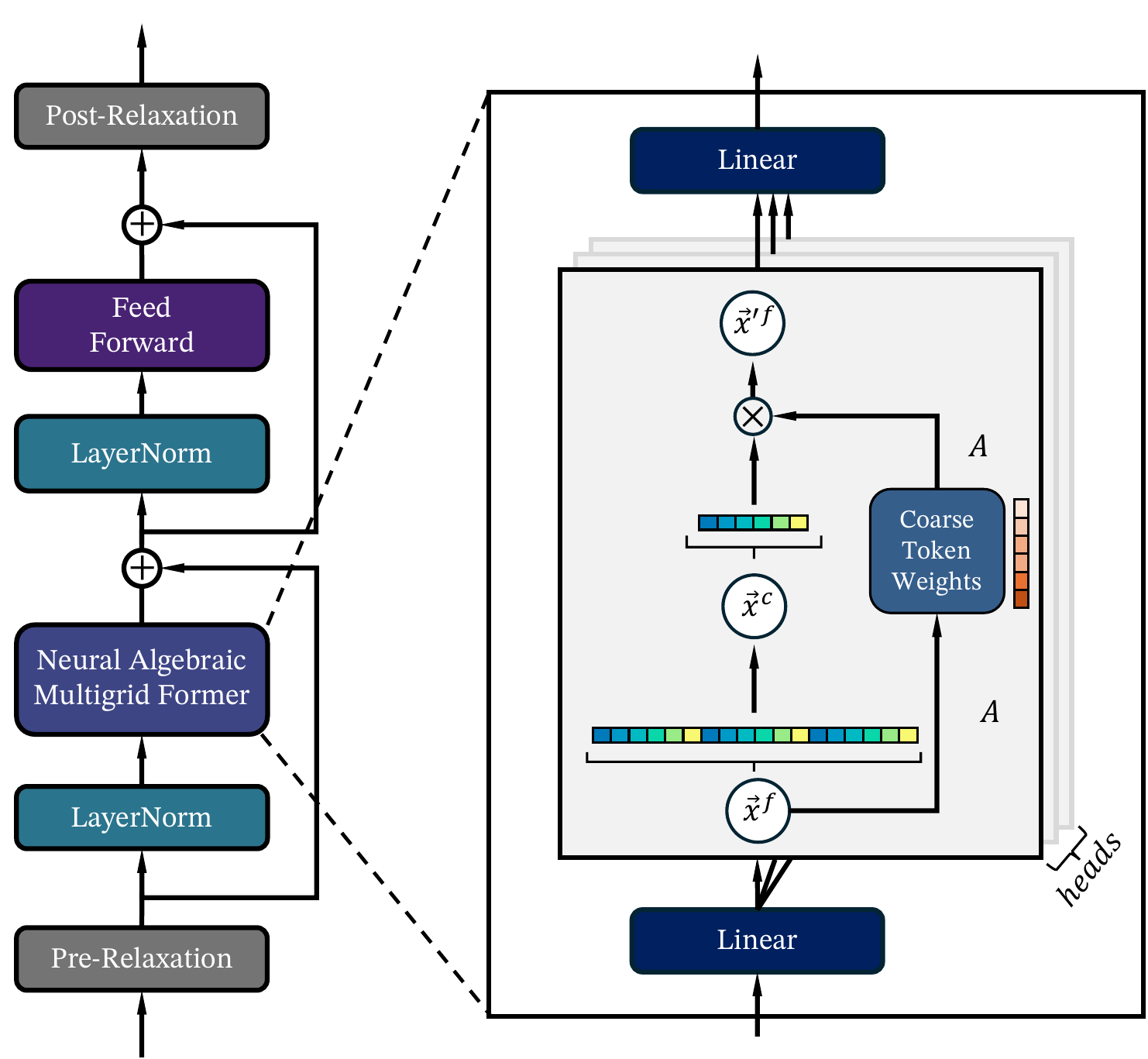}
    \vspace{-10pt}
    \caption{\textbf{Illustration of Neural Algebraic Multigrid Operator.}}
    \label{fig:arch}
\end{figure}

\section{Theoretical Analysis}

\subsection{Convergence Analysis}
We begin by examining the two-grid iteration under standard multigrid assumptions, emphasizing how our approach handles both high- and low-frequency error components. In particular, a smoothing operator \(S\) damps high-frequency errors, while coarse-grid correction addresses low-frequency modes. This two-part strategy yields a convergence rate that does not degrade with increasing problem size.

\begin{theorem}[Two-Grid Convergence]
\label{th:twogrid_convergence}
Let \(\mathbf{e}^{(k)}\) be the error at iteration \(k\) of a two-grid scheme for the SPD system \(A\mathbf{x}=\mathbf{b}\). Suppose the coarse correction satisfies the \emph{Approximation Property} and the smoothing step remains stable. Then there exists a constant \(\rho < 1\) such that
\[
    \|\mathbf{e}^{(k+1)}\|_{a}
    \;\le\;
    \rho\,\|\mathbf{e}^{(k)}\|_{a},
\]
where \(\|\cdot\|_{a}\) is the energy norm induced by \(A\). Consequently, the iteration converges at a rate independent of the system size \(n\).
\end{theorem}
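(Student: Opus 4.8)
The plan is to run the classical Hackbusch-style two-grid argument, specialized to the Galerkin setting that the NAMG operator of Section~\ref{subsec:npo_amg} realizes, so that the conclusion genuinely ``inherits'' from multigrid theory. First I would fix the error-propagation operator of one cycle. Writing the learned prolongation as $P$ (the map underlying the update $\mathbf{x}'^{f}=\mathbf{x}^{f}+P\mathbf{x}'^{c}$), the restriction as $R=P^{\top}$, the Galerkin coarse operator as $A_c=P^{\top}AP$, the smoother error operator as $S$, and the number of pre-smoothing sweeps as $\nu$, the error satisfies $\mathbf{e}^{(k+1)}=E_{\mathrm{TG}}\,\mathbf{e}^{(k)}$ with $E_{\mathrm{TG}}=(I-T)\,S^{\nu}$ and $T=PA_c^{-1}P^{\top}A$. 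The structural fact I would establish up front is that $T$ is the $A$-orthogonal projector onto $\operatorname{range}(P)$ (it is idempotent and $A$-self-adjoint), hence $I-T$ is the $A$-orthogonal projector onto the $A$-complement; in particular $\|I-T\|_{a}=1$ and $(I-T)^{\top}A(I-T)=A(I-T)$. This identity is exactly what lets the coarse correction and the smoothing sweeps be combined multiplicatively in the energy norm.

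Next I would put the two hypotheses of the statement into quantitative form. The \emph{Approximation Property} I would phrase as: there is $C_A$, independent of $n$, with $\|(I-T)\mathbf{v}\|_{a}^{2}\le (C_A/\|A\|_2)\,\|A\mathbf{v}\|_2^{2}$ for all $\mathbf{v}$; using the projector identity above one checks this is equivalent to $\|A^{-1}-PA_c^{-1}P^{\top}\|_2\le C_A/\|A\|_2$, since $\|(I-T)\mathbf v\|_a^2=(A\mathbf v)^{\top}(A^{-1}-PA_c^{-1}P^{\top})(A\mathbf v)$. Stability of smoothing I would encode as the \emph{smoothing property} $\|AS^{\nu}\mathbf{v}\|_2^{2}\le \eta(\nu)\,\|A\|_2\,\|\mathbf{v}\|_{a}^{2}$ with $\eta(\nu)\to 0$, and I would pin down the constant explicitly for the damped-Jacobi / Richardson choice $S=I-\|A\|_2^{-1}A$ by diagonalizing $A$ and bounding $\max_{\lambda\in(0,\|A\|_2]}\lambda(1-\lambda/\|A\|_2)^{2\nu}\le \|A\|_2/(2\nu+1)$, giving $\eta(\nu)=1/(2\nu+1)$.

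The conclusion then follows by chaining these two estimates with $\mathbf v=S^{\nu}\mathbf e^{(k)}$:
\[
\|\mathbf{e}^{(k+1)}\|_{a}^{2}
=\|(I-T)S^{\nu}\mathbf{e}^{(k)}\|_{a}^{2}
\le \frac{C_A}{\|A\|_2}\,\|AS^{\nu}\mathbf{e}^{(k)}\|_2^{2}
\le C_A\,\eta(\nu)\,\|\mathbf{e}^{(k)}\|_{a}^{2},
\]
so the claim holds with $\rho=\sqrt{C_A\,\eta(\nu)}$, and any $\nu>(C_A-1)/2$ yields $\rho<1$. Since $C_A$ is furnished by the approximation hypothesis (a property of the discretization and of the coarsening, not of the number of unknowns) and $\eta(\nu)$ depends only on the sweep count, $\rho$ is mesh-independent; iterating gives $\|\mathbf{e}^{(k)}\|_{a}\le \rho^{k}\|\mathbf{e}^{(0)}\|_{a}$, which is the stated size-robust convergence.

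The algebra above is routine; the real obstacle is justifying that the \emph{learned} operators meet the Approximation Property with an $n$-independent $C_A$. In classical multigrid this rests on interpolation-error / polynomial-reproduction estimates for geometric or algebraic coarsening, whereas here $P$ is produced by the attention mechanism. I would handle this by (i) isolating the minimal conditions on the learned $P$ that make the hypothesis meaningful --- energy stability $\|P\mathbf{v}^{c}\|_{a}\lesssim \|\mathbf{v}^{c}\|_{A_c}$ and the requirement that $\operatorname{range}(P)$ contain (up to small defect) the low-frequency eigenspace of $A$ --- and (ii) observing that the condition and residual losses of Section~\ref{subsec:learning_npo} penalize exactly the error components the coarse space fails to represent, so minimizing them drives the empirical constant down. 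A secondary technical point is that the learned smoother need not be symmetric: there I would either symmetrize it (a $S^{\top}S$-type double sweep, preserving the $A$-norm bound) or replace the energy-norm bound on $S$ by a spectral-radius argument, at the cost of a slightly weaker but still $n$-independent $\rho$.
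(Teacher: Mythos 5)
Your proof is correct and in fact more careful than the one the paper gives; the two take genuinely different routes. The paper's proof in Appendix~B.2 bounds the two stages separately in the energy norm --- it postulates a smoothing bound \(\|\widetilde{\mathbf e}\|_a\le\nu\|\mathbf e^{(k)}\|_a\) with \(\nu<1\) and a coarse-correction bound \(\|\widehat{\mathbf e}\|_a\le\alpha\|\widetilde{\mathbf e}\|_a\) with \(\alpha<1\) (citing the Approximation Property as stated in Property~4.2), then multiplies to get \(\rho=\alpha\nu\). That presentation is short but hides the actual mechanism: a pure smoother does not contract the full energy norm by an \(n\)-independent factor (its rate degrades on low frequencies as the mesh refines), and the \(A\)-orthogonal projector \(I-T\) onto the complement of \(\operatorname{range}(P)\) cannot contract every vector by a uniform \(\alpha<1\), since it acts as the identity on that complement --- so taken literally, neither factor alone gives a mesh-robust constant and the multiplicative form \(\alpha\nu\) does not express the complementarity that makes two-grid work. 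Your version --- identifying \(E_{\mathrm{TG}}=(I-T)S^{\nu}\) with \(T=PA_c^{-1}P^{\top}A\) an \(A\)-orthogonal projector, phrasing the Approximation Property as \(\|(I-T)\mathbf v\|_a^2\le (C_A/\|A\|_2)\|A\mathbf v\|_2^2\), phrasing the Smoothing Property as \(\|AS^{\nu}\mathbf v\|_2^2\le\eta(\nu)\|A\|_2\|\mathbf v\|_a^2\) with an explicit \(\eta(\nu)=1/(2\nu+1)\) for Richardson, and chaining them to get \(\rho=\sqrt{C_A\eta(\nu)}\) --- is the standard Hackbusch duality argument, and it is the one that actually delivers the mesh-independent rate claimed. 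You also correctly flag that the real obstacle in this paper's setting is establishing the Approximation Property with an \(n\)-independent \(C_A\) for the \emph{learned} \(P\); the paper's Property~4.2 and Appendix~B.1 essentially assume this, whereas your requirements (energy stability of \(P\) and near-containment of the low-frequency eigenspace in \(\operatorname{range}(P)\)) and your observation that the condition/residual losses are a training-time surrogate for them go further than what the paper spells out.
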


\noindent
(See Appendix~\ref{appendix:proof_2} for proof.) A key ingredient is the coarse space’s ability to capture smooth (low-frequency) errors. In classical multigrid, this is formalized by the \emph{Approximation Property}:

\begin{property}[Approximation Property]
\label{prop:approximation}
Let \(P\in\mathbb{R}^{n\times m}\) be the prolongation operator from the coarse space \(\mathbb{R}^m\) to the fine space \(\mathbb{R}^n\). For any error vector \(\mathbf{e}\in \mathbb{R}^n\), there exists a coarse representation \(\mathbf{z}^c \in \mathbb{R}^m\) such that
\[
    \min_{\mathbf{z}^c}
    \|\mathbf{e} - P\,\mathbf{z}^c\|_{a}
    \;\le\;
    \alpha\,\|\mathbf{e}\|_{a},
\]
with \(\alpha < 1\). (See Appendix~\ref{appendix:proof_4} for details.) 
\end{property}

\noindent
(See Appendix~\ref{appendix:proof_1} for proof.) Together, Theorem~\ref{th:twogrid_convergence} and the Approximation Property ensure uniform error reduction per iteration: smoothing removes high-frequency errors, while the coarse grid approximates low-frequency errors sufficiently well.

\subsection{Operator Properties}

Beyond two-grid convergence, another measure of preconditioner quality is its impact on the spectrum of \(MA\). If \(M\approx A^{-1}\), the eigenvalues of \(MA\) should lie near 1, yielding a small condition number and fast convergence for Krylov solvers like CG or GMRES.

\begin{theorem}[Preconditioned Spectrum Clustering]
\label{th:spectrum_clustering}
Let \(A \in \mathbb{R}^{n \times n}\) be SPD, and let \(M\) be a preconditioner satisfying the smoothing and coarse approximation assumptions of Theorem~\ref{th:twogrid_convergence}. Then there exist constants \(0 < \lambda_{\min} \le \lambda_{\max}\) close to 1 such that every eigenvalue \(\lambda\) of \(MA\) lies in the interval \([\lambda_{\min}, \lambda_{\max}]\). Hence, \(\kappa(MA)=\lambda_{\max}/\lambda_{\min} \) remains near 1, ensuring rapid convergence.
\end{theorem}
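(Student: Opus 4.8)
The plan is to exploit the algebraic link between the two-grid error-propagation operator and the spectrum of \(MA\). Write \(E = I - MA\) for the error-propagation operator of the (symmetrized) two-grid cycle whose action defines the preconditioner \(M\). The first step is to record that \(M\) is SPD: this follows from the symmetric construction of the cycle — pre- and post-smoothing by mutually adjoint smoothers together with a Galerkin coarse correction \(P A_c^{-1} P^\top\). Consequently \(MA\) is self-adjoint with respect to the energy inner product \(\langle \mathbf{u}, \mathbf{v}\rangle_a = \mathbf{v}^\top A \mathbf{u}\), since \(\langle MA\mathbf{u}, \mathbf{v}\rangle_a = \mathbf{v}^\top A M A \mathbf{u} = \langle \mathbf{u}, MA\mathbf{v}\rangle_a\) using \(A^\top = A\) and \(M^\top = M\). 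Hence \(MA\) has real, strictly positive eigenvalues \(\lambda_1 \le \cdots \le \lambda_n\) and an \(a\)-orthonormal eigenbasis, and the same holds for \(E\), whose eigenvalues are \(1 - \lambda_j\).

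Second, I would use the identity \(\|E\|_a = \rho(E) = \max_j |1 - \lambda_j|\), valid because \(E\) is \(a\)-self-adjoint. Theorem~\ref{th:twogrid_convergence} — whose hypotheses, the Approximation Property~\ref{prop:approximation} and smoothing stability, are exactly those assumed here — supplies a constant \(\rho < 1\) with \(\|E\|_a \le \rho\). Combining these, \(|1 - \lambda_j| \le \rho\) for every \(j\), so every eigenvalue of \(MA\) lies in \([\lambda_{\min}, \lambda_{\max}]\) with \(\lambda_{\min} = 1 - \rho\) and \(\lambda_{\max} = 1 + \rho\). Both endpoints are close to \(1\) precisely when \(\rho\) is small, which the theory guarantees when the coarse space captures smooth errors well (\(\alpha\) small in Property~\ref{prop:approximation}) and the smoother is effective. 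The strict positivity \(\lambda_{\min} > 0\) also certifies that the preconditioned operator is SPD in the energy inner product, so CG applies.

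Third, I would convert this into the condition-number statement: \(\kappa(MA) = \lambda_{\max}/\lambda_{\min} = (1+\rho)/(1-\rho) \to 1\) as \(\rho \to 0\); since \(M^{1/2} A M^{1/2}\) is symmetric and similar to \(MA\), the standard bound \(\|\mathbf{x}-\mathbf{x}_k\|_a \le 2\bigl((\sqrt{\kappa}-1)/(\sqrt{\kappa}+1)\bigr)^k \|\mathbf{x}-\mathbf{x}_0\|_a\) then gives the claimed rapid convergence of the preconditioned Krylov solver. The main obstacle is the symmetry requirement: the clean equality \(\|E\|_a = \rho(E)\) fails unless the learned smoother and transfer operators are arranged so that \(M\) is symmetric (essentially \(P = R^\top\) up to the symmetrization, plus a symmetric coarse correction). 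I would handle this either by imposing those symmetry constraints on the NAMG architecture, or, in the general non-symmetric case, by replacing the eigenvalue-interval conclusion with a field-of-values bound — \(|\langle (I-MA)\mathbf{v}, \mathbf{v}\rangle_a| \le \rho \langle \mathbf{v},\mathbf{v}\rangle_a\) — which still confines the spectrum of \(MA\) to a disk of radius \(\rho\) about \(1\), preserving the qualitative clustering-near-one conclusion and the resulting fast GMRES convergence.
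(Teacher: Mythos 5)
Your proof is correct, and it is substantially more explicit than the paper's own argument. The paper's proof is essentially a qualitative appeal: it restates the smoothing and coarse-approximation properties, asserts that their combination forces \(MA \approx I\), and cites standard multigrid analysis for the conclusion that eigenvalues cluster near \(1\), without producing a quantitative interval. You instead make the algebraic link precise: you identify \(E = I - MA\) as the two-grid error-propagation operator, observe that \(M\) SPD makes \(MA\) (hence \(E\)) self-adjoint in the energy inner product, invoke \(\|E\|_a = \rho(E) = \max_j |1-\lambda_j|\), and then import the bound \(\|E\|_a \le \rho < 1\) directly from Theorem~\ref{th:twogrid_convergence} to obtain the explicit interval \([\,1-\rho,\ 1+\rho\,]\) and \(\kappa(MA) \le (1+\rho)/(1-\rho)\). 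This buys a concrete, checkable bound and makes the dependence on \(\rho\) visible, which the paper's version does not. You also surface a genuine gap the paper glosses over: the equality \(\|E\|_a = \rho(E)\) requires \(M\) to be symmetric (and hence the cycle to be symmetrized, with adjoint pre/post-smoothers and a Galerkin coarse solve), and your fallback via the field of values in the non-symmetric case is the right way to salvage the qualitative clustering claim for a learned, possibly non-symmetric \(M\). One small caution: the theorem as stated asserts \(\lambda_{\min} > 0\), which under your argument holds automatically only because \(\rho < 1\); it is worth noting that this already presupposes the symmetrized cycle is a genuine contraction, i.e.\ that Theorem~\ref{th:twogrid_convergence} applies to \(E = I - MA\) itself and not merely to one smoothing-plus-correction sweep whose composition defines \(M\). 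That identification is standard but deserves a sentence.
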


\noindent
(See Appendix~\ref{appendix:proof_3} for proof.) This spectral view complements the two-grid analysis by connecting eigenvalue clustering to reduced iteration counts. In a neural setting, the learned operators play a role analogous to restriction, prolongation, and smoothing, thereby preserving these spectral advantages.

Finally, we note that our Neural Algebraic Multigrid (NAMG) Operator can be interpreted as a learnable integral over the domain \(\Omega\), supporting adaptive feature extraction on coarse grids. Formally:

\begin{theorem}[NAMG Operator as a Learnable Integral]
\label{th:integral}
Given an input \(a:\Omega\to\mathbb{R}^d\) and a point \(\mathbf{x}\in\Omega\), the NAMG operator approximates
\[
    \mathcal{G}a(\mathbf{x}) 
    \;=\;
    \int_{\Omega} 
    \kappa(\mathbf{x}, \boldsymbol{\xi})\,a(\boldsymbol{\xi})
    \,d\boldsymbol{\xi},
\]
for some learnable kernel \(\kappa\). 
\end{theorem}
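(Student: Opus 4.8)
The plan is to show that the NAMG operator---restriction, attention-based coarse correction, and prolongation---can be rewritten, for a fixed choice of learned parameters, as a discrete quadrature of a single kernel integral, and then to argue that refining the mesh sends this quadrature to the continuous integral $\int_\Omega \kappa(\mathbf{x},\boldsymbol{\xi})\,a(\boldsymbol{\xi})\,d\boldsymbol{\xi}$. First I would make the dictionary explicit: the restriction $R = A\cdot E_\theta$ maps a fine-grid function sampled at nodes $\{\mathbf{y}_i\}$ to coarse values that are weighted averages $\sum_i e_{ji}\,a(\mathbf{y}_i)$; the self-attention update on the coarse grid is itself a (row-stochastic) linear map $\mathbf{x}^c \mapsto S_\theta\,\mathbf{x}^c$ once the query/key/value weights are fixed; and the prolongation $P = A\cdot E_\theta^\top$ scatters coarse values back to the fine grid. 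Composing, the map $a \mapsto \mathbf{x}'^f - \mathbf{x}^f$ is $P\,S_\theta\,R$ applied to the nodal vector of $a$; reading off the $(\mathbf{x},\mathbf{y}_i)$ entry of $P\,S_\theta\,R$ and calling it $w_i(\mathbf{x})$ gives $(\mathcal{G}a)(\mathbf{x}) \approx \sum_i w_i(\mathbf{x})\,a(\mathbf{y}_i)$, which is exactly a quadrature rule.

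The next step is to identify the kernel. I would write $w_i(\mathbf{x}) = \kappa_\theta(\mathbf{x},\mathbf{y}_i)\,\mu_i$, where $\mu_i$ is the volume of the control cell around $\mathbf{y}_i$ (inherited from the finite-volume/finite-element discretization of Section~\ref{subsec:discretization}) and $\kappa_\theta$ collects the attention weights and the sparsity pattern of $A$. Because the attention coefficients $e_{ji}$ and the softmax scores are continuous functions of node features, and the node features are themselves samples of continuous fields on $\Omega$, $\kappa_\theta$ extends to a continuous function on $\Omega\times\Omega$; I would state this as a mild regularity assumption on $\mathbf{W}_{\text{coarse}}$, $\mathbf{W}_q$, $\mathbf{W}_k$, $\mathbf{W}_v$ and on the feature-lifting map, exactly the kind of hypothesis already used implicitly in the FNO-style neural-operator framework cited in Section~2.4. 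Then the sum $\sum_i \kappa_\theta(\mathbf{x},\mathbf{y}_i)\,a(\mathbf{y}_i)\,\mu_i$ is a Riemann sum for $\int_\Omega \kappa_\theta(\mathbf{x},\boldsymbol{\xi})\,a(\boldsymbol{\xi})\,d\boldsymbol{\xi}$, and standard quadrature consistency gives convergence as the mesh size $h\to 0$, with rate governed by the smoothness of $\kappa_\theta\,a$. Setting $\kappa := \kappa_\theta$ finishes the identification.

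The main obstacle is handling the coarsening honestly: the restriction step changes the index set (fine nodes $\to$ coarse nodes), so $S_\theta$ does not act on the same grid as $R$ and $P$, and one must check that the coarse-grid quadrature weights, when pushed back through $P$, still form a \emph{consistent} rule on the fine grid rather than merely a bounded one. I would address this by assuming the coarsening is quasi-uniform (each coarse cell is the union of a bounded number of fine cells, with aspect ratios bounded independent of $h$), so that $R$ aggregates with weights summing to one over a shrinking neighborhood and $P$ redistributes them back without introducing an $O(1)$ interpolation error---this is the discrete analogue of the Approximation Property already invoked in Property~\ref{prop:approximation}. A secondary, more technical point is the $\tau$-scaled softmax in $e_{ji}$: its denominator involves $A_{ji'}$, which scales like $h^{d-2}$ for a second-order stiffness matrix, so I would absorb that scaling into the definition of $\mu_i$ (or equivalently normalize $A$ to a graph-Laplacian-type operator before forming $E_\theta$) to keep $\kappa_\theta$ bounded as $h\to 0$. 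With those bookkeeping choices made, the remaining estimates are routine Riemann-sum bounds and I would relegate them to the appendix.
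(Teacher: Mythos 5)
Your proposal is correct in spirit but proceeds along a genuinely different and substantially more rigorous route than the paper's own proof. The paper's argument in Appendix~\ref{appendix:proof_4} looks only at the attention update in isolation: it identifies the unnormalized attention score \(\alpha_i\propto\exp(\mathbf{a}^\top[\mathbf{W}a(\mathbf{x})\,\|\,\mathbf{W}a(\mathbf{x}_i)])\) with kernel samples \(\kappa(\mathbf{x},\mathbf{x}_i)\), declares that summing over the neighborhood \(\mathcal{N}(\mathbf{x})\) ``corresponds to sampling from \(\Omega\),'' and concludes the update \(\sum_i \alpha_i\,\mathbf{W}a(\mathbf{x}_i)\) is a Monte Carlo approximation of the integral. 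There is no discussion of volume elements, mesh refinement, the restriction/prolongation operators, or any regularity hypothesis; the proof is essentially a formal matching of the two expressions.

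You instead compose the full pipeline \(P\,S_\theta\,R\) into one linear map, read off its entries as quadrature weights \(w_i(\mathbf{x})=\kappa_\theta(\mathbf{x},\mathbf{y}_i)\,\mu_i\) with \(\mu_i\) the control-cell volumes, and then argue Riemann-sum consistency as \(h\to 0\) under explicit regularity and quasi-uniformity hypotheses. This buys you several things the paper does not have: (i) the kernel is actually attached to the \emph{whole} NAMG operator rather than just its middle block, so restriction and prolongation are not silently dropped; (ii) the word ``approximates'' in the theorem acquires a concrete meaning (consistency of a quadrature rule as the mesh refines) rather than being asserted; and (iii) you surface and resolve a real normalization issue---the \(h^{d-2}\) scaling of the stiffness entries appearing in the softmax denominator---that the paper's argument ignores and that would otherwise make the candidate kernel blow up as \(h\to 0\). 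The cost is that you need hypotheses (continuity of the lifted features, quasi-uniform coarsening, absorption of the \(A\)-scaling into \(\mu_i\)) that the paper never states. Since the theorem as written is an informal claim, both arguments ``prove'' it, but yours is the one that would survive being asked what the kernel actually is and in what sense the discrete operator converges to it.

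One small caution: the paper's appendix uses a GAT-style concatenation score \(\mathbf{a}^\top[\mathbf{W}a(\mathbf{x})\,\|\,\mathbf{W}a(\mathbf{x}_i)]\), whereas Section~\ref{subsec:npo_amg} defines the coarse-correction attention with a scaled dot-product \(\mathbf{q}_j\cdot\mathbf{k}_k^\top/\sqrt{C}\) and the restriction weights \(e_{ji}\) with a \(\tau\)-scaled exponential. Your write-up should be explicit about which of these maps you are identifying with \(\kappa_\theta\); they have the same row-stochastic structure, so the argument goes through in either case, but a clean proof should pick one and use it consistently rather than inherit the paper's mismatch.
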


\noindent
(See Appendix~\ref{appendix:proof_4} for proof.) This perspective unifies the classical AMG principle of coarse-grid correction with a data-driven, integral-based formulation. It underscores the capacity of neural operators to adaptively handle diverse PDE structures, ultimately enhancing both convergence and generalization.

\begin{table*}[ht]
\centering
\caption{Performance comparison of GMRES with different methods on Poisson equation with different mesh type, showing time and iterations at different tolerances.}
\vspace{5pt}
\label{table:poisson}
\renewcommand\arraystretch{1.0}
\begin{sc}
    \renewcommand{\multirowsetup}{\centering}
    \resizebox{\linewidth}{!}{
    \begin{tabular}{l|cccc|cccc|cccc}
        \toprule
        Mesh Type & \multicolumn{4}{c}{Grid=512} & \multicolumn{4}{c}{Grid=32*32} & \multicolumn{4}{c}{Irregular} \\
        \cmidrule(lr){1-1} \cmidrule(lr){2-5} \cmidrule(lr){6-9} \cmidrule(lr){10-13} \multirow{2}{*}{Method} & Time (s) & \multicolumn{3}{c}{Iteration} & Time (s) & \multicolumn{3}{c}{Iteration} & Time (s) & \multicolumn{3}{c}{Iteration} \\
        \cmidrule(lr){2-2} \cmidrule(lr){3-5} \cmidrule(lr){6-6} \cmidrule(lr){7-9} \cmidrule(lr){10-10} \cmidrule(lr){11-13} & 1E-10 & 1E-10 & 1E-6 & 1E-4 & 1E-10 & 1E-10 & 1E-6 & 1E-4 & 1E-10 & 1E-10 & 1E-6 & 1E-4 \\
        \midrule
        Jacobi & 3.719 & 513 & 513 & 510 & 2.7337 & 113 & 84 & 62 & 0.455 & 153 & 107 & 73 \\
        Gauss Seidel & 3.536 & 502 & 495 & 494 & 1.7569 & 81 & 68 & 57 & 0.379 & 130 & 99 & 76 \\
        SOR & 3.452 & 502 & 495 & 494 & 2.2438 & 81 & 68 & 57 & 0.364 & 130 & 99 & 76\\
        \midrule
        MLP & 2.359 & 513 & 258 & 256 & 0.6295 & 88 & 41 & 21 & \underline{0.275} & \underline{115} & \underline{62} & \underline{17}\\
        UNet & 2.147 & 494 & 481 & 473 & 12.4477 & 1025 & 1023 & 1020 & / & / & / & /\\
        FNO & 1.785 & 403 & 349 & 318 & 0.8354 & 93 & 40 & 21 & 0.342 & 140 & 82 & 44\\
        Transolver & 1.793 & 472 & \underline{257} & 257 & \underline{0.1380} & 87 & 39 & 20 & 0.334 & 142 & 86 & 50 \\
        M2NO & \underline{1.543} & \underline{307} & 302 & \underline{199} & 0.4780 & \underline{69} & \underline{35} & \underline{18} & / & / & / & / \\
        \midrule
        \textbf{NPO} & \textbf{0.623} & \textbf{184} & \textbf{134} & \textbf{93} & \textbf{0.0751} & \textbf{34} & \textbf{19} & \textbf{10} & \textbf{0.162} & \textbf{82} & \textbf{60} & \textbf{13}\\
        \bottomrule
    \end{tabular}}
\end{sc}
\end{table*}

\section{Experiments}

This section evaluates our Neural Preconditioning Operator (NPO) across various PDEs, mesh resolutions, and solver settings. We first detail the overall experimental setup and baselines, then present the main results for Poisson, Diffusion, and Linear Elasticity problems. Finally, we examine model generalization, ablation studies, and hyperparameter sensitivity to demonstrate how NPO balances data, residual, and condition losses for robust, scalable performance.

\subsection{Experiment Setup}
\label{subsec:setup}
\subsubsection{General Setting}
We aim to learn a Neural Preconditioning Operator (NPO), \(\mathcal{M}_{\theta}\), that expedites Krylov-based solvers for discretized PDE systems. Each instance in our dataset consists of a system matrix \(A_{i}\), derived from a finite element or finite difference discretization, and multiple recorded solution states \(\bigl(\mathbf{x}_{i,k}, \mathbf{b}_{i}, \mathbf{r}_{i,k}\bigr)\). Here, \(\mathbf{x}_{i,k}\) and \(\mathbf{r}_{i,k}\) represent the partial solution and residual at iteration \(k\) of a baseline solver (e.g., GMRES with AMG). 

The learning objective combines two components: \emph{Data Loss}, which leverages \(\mathbf{x}_{i,k}\) and \(\mathbf{b}_{i}\) to align the network’s output with observed solution patterns; \emph{Residual Loss}, inspired by \eqref{eq:residual_loss}, which nudges \(A_{i}\,\mathcal{M}_{\theta}(A_{i})\) toward the identity operator on residual vectors \(\mathbf{r}_{i,k}\). By balancing these two losses, \(\mathcal{M}_{\theta}\) learns to approximate \(A^{-1}\) while reflecting the intermediate dynamics of the iterative solution process.

\subsubsection{Dataset Generation}
To build our dataset, we discretize the governing PDEs (using finite elements or finite differences) to obtain matrices \(A \in \mathbb{R}^{n\times n}\). We then sample right-hand sides \(\mathbf{b}\) from a Gaussian Random Field (GRF) \(\phi(x)\), mapped onto the same mesh as \(A\). For each system \(A\mathbf{x} = \mathbf{b}\), we run a baseline Krylov solver (CG or GMRES) preconditioned by Algebraic Multigrid (AMG), with a convergence tolerance of \(10^{-10}\) and a cap of 100 iterations. At each step, we record the current partial solution \(\mathbf{x}_{k}\) and residual \(\mathbf{r}_{k} = \mathbf{b} - A\,\mathbf{x}_{k}\). These \(\{(A, \mathbf{b}, \mathbf{x}_{k}, \mathbf{r}_{k})\}\) samples form a comprehensive dataset capturing diverse solution states and residual behaviors, which we use to train \(\mathcal{M}_{\theta}\).

We consider three PDE systems: the Poisson equation, the diffusion equation, and the linear elasticity equation, each discretized to generate our dataset.

\textbf{Poisson Equation.} The Poisson equation is defined as:
\begin{equation}
    -\nabla \cdot (\nabla u) = f \quad \text{in} \, D,
\end{equation}
where \(u\) is the unknown solution and \(f\) is a given source term. It models potential fields such as electrostatics and steady-state heat conduction.

\textbf{Diffusion Equation.} The diffusion equation models the spread of substances over time:
\begin{equation}
    \frac{\partial u}{\partial t} = \nabla \cdot (D \nabla u),
\end{equation}
where \(D\) is the diffusion coefficient. 

\textbf{Linear Elasticity.} The linear elasticity equations describe the deformation of elastic materials under applied forces:
\begin{equation}
    \nabla \cdot \sigma = \mathbf{f}, \quad \sigma = \lambda (\nabla \cdot \mathbf{u}) I + \mu (\nabla \mathbf{u} + \nabla \mathbf{u}^\top),
\end{equation}
where \(\mathbf{u}\) is the displacement field, \(\sigma\) is the stress tensor, and \(\lambda, \mu\) are Lamé parameters. These equations are critical for structural mechanics simulations.

\subsubsection{Baselines}
\label{subsec:baselines}

Our evaluation framework encompasses three methodological categories: 
(i) \textbf{classical iterative methods} Jacobi, Gauss-Seidel, and SOR; 
(ii) \textbf{data-driven architectures} MLP \cite{86:mlp} and U-Net \cite{15:UNet}; (iii) \textbf{operator learning approaches} including Fourier neural operators (FNO \cite{21:fno}), 
transformer-based solvers (Transolver \cite{24:Transolver}), 
and multigrid-enhanced operators (M2NO \cite{24:M2NO}). 
This hierarchy spans from foundational numerical analysis to modern deep learning paradigms, ensuring rigorous comparison across methodological generations.

\begin{table}[ht]
\centering
\caption{Performance comparison of GMRES with different methods on Diffusion and Linear Elasticity, showing time and iterations at tolerances of 1E-10.}
\vspace{5pt}
\label{table:results}
\begin{sc}
\resizebox{\linewidth}{!}{
\begin{tabular}{l|cc|cc}
\toprule
\multirow{2}{*}{Method/Dataset} & \multicolumn{2}{c}{Diffusion} & \multicolumn{2}{c}{Linear Elasticity} \\
\cmidrule(lr){2-3} \cmidrule(lr){4-5} & Time (s) & Iteration & Time (s) & Iteration \\
\midrule
Jacobi & 3.0544 & 207 & 0.6042 & 79 \\
Gauss Seidel & 1.5750 & 139 & 0.4875 & 62 \\
SOR & 1.7898 & 139 & 0.3139 & 62 \\
\midrule
MLP & 0.3466 & 138 & 0.0296 & 40  \\
UNet & 21.0913 & 1025 & 1.3147 & 511 \\
FNO & 0.7532 & 157 & 0.0312 & 52 \\
Transolver & 0.3381 & 156 & 0.0306 & 45 \\
M2NO & 1.2183 & 373 & 1.0465 & 453 \\
\midrule
\textbf{NPO} & \textbf{0.1058} & \textbf{38} & \textbf{0.0267} & \textbf{31}\\
\bottomrule
\end{tabular}}
\end{sc}
\end{table}

\subsection{Main Results}
\label{subsec:results}
Table~\ref{table:poisson} compares the performance of our Neural Preconditioning Operator (NPO) against various classical and neural methods for Poisson problems on a one dimensional uniform \(512\) grid, a two dimensional uniform \(32\times32\) grid, and an irregular mesh. Each method is integrated into GMRES, and we report both runtime (in seconds) and iteration counts for different tolerances. Table~\ref{table:results} further evaluates NPO on Diffusion and Linear Elasticity, also at a \(32\times32\) resolution, showing similar trends.

\textbf{Classical Methods.}
Jacobi, Gauss--Seidel, and SOR serve as baselines. While they eventually converge, their iteration counts remain high across all tests. For instance, at \(\mathrm{tol}=10^{-10}\) on the \(512\) mesh (Poisson), SOR requires 502 iterations and 3.452\,s, with Gauss--Seidel close behind at around 500 iterations. On the irregular mesh, they still demand up to 130 iterations (0.379\,s). Similarly, Table~\ref{table:results} shows that even for the smaller \(32\times32\) Diffusion problem, SOR and Gauss--Seidel need 139 iterations or more.

\textbf{Neural-Based Approaches.}
MLP, U-Net, FNO, Transolver, and M2NO all improve upon classical smoothers by reducing iteration counts or runtime. For example, FNO achieves 403 iterations in 1.785\,s at \(10^{-10}\) on the uniform \(512\) Poisson problem, while Transolver lowers it to 472 iterations in roughly the same time. M2NO cuts both time and iterations further. Table~\ref{table:results} confirms this pattern on Diffusion and Linear Elasticity: FNO and Transolver surpass classical methods, although they may still require over 100 iterations in certain cases.

\textbf{Neural Preconditioning Operator (NPO).}
Our proposed NPO demonstrates consistently fewer iterations and shorter runtime across all test settings. For instance, at \(\mathrm{tol}=10^{-10}\) on the \(512\) Poisson grid, NPO converges in 184 iterations and 0.623\,s, providing more than a twofold speedup over SOR or Gauss--Seidel. At a looser tolerance of \(10^{-4}\), NPO completes in just 93 iterations, whereas classical methods still require 300 or more. On the irregular mesh, NPO requires only 82 iterations in 0.162\,s at \(\mathrm{tol}=10^{-10}\). For Diffusion and Linear Elasticity (Table~\ref{table:results}, \(32\times32\)), NPO similarly outperforms both classical and neural rivals, converging in as few as 38 iterations for Diffusion and 31 for Linear Elasticity, with minimal total runtime.

\begin{figure}

\begin{tikzpicture}
\begin{axis}[
    xlabel=\textsc{Iteration},
    ylabel=\textsc{Relative Residuals},
    ylabel style={yshift=-0.5em},
    ymode=log,
    grid=major,
    width=14cm,
    height=6cm,
    xmin=0,
    legend style={at={(0.5,1.05)}, anchor=south, legend columns=-1, font=\small, draw=none},
]

    \addplot[color=vir0, mark=diamond] coordinates {
        (0, 1.0) (10, 0.8617032125984454) (20, 0.8458230932693831) (30, 0.8264356900969683) (40, 0.8046593444586116) (50, 0.7849200411262359) (60, 0.764532608404655) (70, 0.7491075102497443) (80, 0.730669610438604) (90, 0.7000198935288492) (100, 0.6793255554229471) (110, 0.6591562989962476) (120, 0.6376568032637498) (130, 0.6178034268333955) (140, 0.5972182831752685) (150, 0.56979371449403) (160, 0.5443535235296356) (170, 0.5165572381227819) (180, 0.48179747589896127) (190, 0.4481790779100302) (200, 0.42150754757401376) (210, 0.38288167455331756) (220, 0.3283251894963469) (230, 0.28233574048722837) (240, 0.2073404217485558) (250, 0.13349682599952534) (260, 0.07642216755177213) (270, 0.05405551489825935) (280, 0.04434881540226125) (290, 0.03852973350618292) (300, 0.03478191320399114) (310, 0.030405753554858708) (320, 0.02761011598991033) (330, 0.0255501349970738) (340, 0.02368400431142437) (350, 0.02057089695114903) (360, 0.017962173226875374) (370, 0.016357944578556563) (380, 0.015498536324660705) (390, 0.014865310840794527) (400, 0.013745399606699005) (410, 0.012658834880046337) (420, 0.011191505935175957) (430, 0.009389809358145065) (440, 0.008598123332980087) (450, 0.0074077725946775416) (460, 0.005691474030447884) (470, 0.003891477095081698) (480, 0.002601787005061453) (490, 0.001808043340415448) (491, 0.0017267174931723576) (492, 0.001613407107839085) (493, 0.001484681465117625) (494, 0.0013370453626555918) (495, 0.001185492226792755) (496, 0.001065382737558863) (497, 0.0009560349102341194) (498, 0.0008672069235879661) (499, 0.0007989010799369994) (500, 0.0007233934828814808) (501, 0.0006889710920541316) (502, 0.000634283001410715) (503, 0.0005718640401320616) (504, 0.0004887174879012805) (505, 0.00036335850941644315) (506, 0.00026675095700580463) (507, 0.00018114829753890845) (508, 0.0001414150033806168) (509, 0.00010812405676667209) (510, 9.472774592375564e-05) (511, 4.644841603349241e-05) (512, 1.0042496506701536e-10) 
    };
    \addlegendentry{Jacobi}
    
    \addplot[color=vir1, mark=triangle] coordinates {
        (0, 1.0) (10, 0.9419617275661877) (20, 0.9240541787567774) (30, 0.8995001266856225) (40, 0.8746301582730358) (50, 0.8524288884395915) (60, 0.8310530363592776) (70, 0.8126607315344698) (80, 0.7864716512974249) (90, 0.7545858840564603) (100, 0.7327127301131534) (110, 0.7091468609601421) (120, 0.6853617377346405) (130, 0.6635374727937213) (140, 0.6347498552392201) (150, 0.6046882392250652) (160, 0.5753770423023588) (170, 0.5373087863680107) (180, 0.5013835940062829) (190, 0.4675908469623859) (200, 0.4249509256398726) (210, 0.3786558813516064) (220, 0.31766671996795226) (230, 0.23809164865980076) (240, 0.16614377679904924) (250, 0.10430176579222451) (260, 0.06686376175016114) (270, 0.04699511013231828) (280, 0.037147257115501216) (290, 0.03128492306740041) (300, 0.026504127942343855) (310, 0.02266624374387016) (320, 0.019032414410667057) (330, 0.017291957147836757) (340, 0.015999717444953696) (350, 0.015069416884284537) (360, 0.014413853509388836) (370, 0.013788111993964076) (380, 0.011534694663308424) (390, 0.009691062357088918) (400, 0.008449305419103841) (410, 0.0073172665663327335) (420, 0.006896321907878315) (430, 0.006341017881434075) (440, 0.004251515330708114) (450, 0.0020691630588748713) (460, 0.0008829263790903801) (470, 0.0004978960002697326) (480, 0.000356328918336287) (490, 0.00014617131794757436) (491, 8.950505010408427e-05) (492, 6.347633162165342e-05) (493, 4.247892089181589e-05) (494, 2.8169379745910093e-05) (495, 1.7921748791707997e-05) (496, 1.159679616123573e-05) (497, 6.650472027143783e-06) (498, 4.416870070380276e-06) (499, 2.7622328159840534e-06) (500, 1.580449206628693e-06) (501, 5.297359306557617e-07) (502, 1.0009807687459418e-07) (503, 1.5020055897995587e-08) (504, 4.423869002155035e-09) (505, 2.7972359755621495e-11) 
    };
    \addlegendentry{Gauss Seidel}
    
    \addplot[color=vir2, mark=star] coordinates {
        (0, 1.0) (10, 0.9419617275661879) (20, 0.9240541787567801) (30, 0.8995001266856302) (40, 0.87463015827305) (50, 0.852428888439612) (60, 0.8310530363593023) (70, 0.8126607315344948) (80, 0.7864716512974413) (90, 0.7545858840564611) (100, 0.7327127301131409) (110, 0.7091468609601151) (120, 0.6853617377345983) (130, 0.6635374727936666) (140, 0.634749855239151) (150, 0.604688239224982) (160, 0.5753770423022624) (170, 0.5373087863678999) (180, 0.5013835940061624) (190, 0.4675908469622656) (200, 0.4249509256397617) (210, 0.37865588135150824) (220, 0.31766671996787027) (230, 0.23809164865974003) (240, 0.16614377679899872) (250, 0.10430176579216685) (260, 0.06686376175008861) (270, 0.04699511013222841) (280, 0.0371472571153955) (290, 0.03128492306728149) (300, 0.026504127942211155) (310, 0.022666243743724907) (320, 0.019032414410508132) (330, 0.017291957147671677) (340, 0.0159997174447869) (350, 0.015069416884119731) (360, 0.014413853509228929) (370, 0.013788111993814217) (380, 0.011534694663199981) (390, 0.009691062357008588) (400, 0.008449305419041648) (410, 0.007317266566285207) (420, 0.006896321907836127) (430, 0.0063410178814000025) (440, 0.004251515330697043) (450, 0.002069163058875871) (460, 0.0008829263790968647) (470, 0.0004978960002782556) (480, 0.0003563289183433794) (490, 0.00014617131794844492) (491, 8.950505010198409e-05) (492, 6.34763316202682e-05) (493, 4.247892089073655e-05) (494, 2.81693797452052e-05) (495, 1.792174879123632e-05) (496, 1.1596796160916205e-05) (497, 6.650472026784186e-06) (498, 4.416870070163247e-06) (499, 2.762232815785411e-06) (500, 1.5804492065414308e-06) (501, 5.29735930645629e-07) (502, 1.000980768803949e-07) (503, 1.502005590354738e-08) (504, 4.423869003656227e-09) (505, 2.8592118933433238e-11) 
    };
    \addlegendentry{SOR}
    
    \addplot[color=vir3, mark=o] coordinates {
        (0, 1.0) (10, 0.006478528515589514) (20, 0.004472957377648451) (30, 0.0035696496264016678) (40, 0.00302859398272148) (50, 0.002651567565708821) (60, 0.0023599165710731333) (70, 0.0021294002855318877) (80, 0.0019389772266244438) (90, 0.001774682290772257) (100, 0.0016318636253923793) (110, 0.0015062073909356546) (120, 0.0013918982673832916) (130, 0.0012872829778096872) (140, 0.0011902530198988016) (150, 0.0010987548859472099) (160, 0.0010119263014122576) (170, 0.0009295757972282848) (180, 0.0008493673705253236) (190, 0.0007694569176512706) (200, 0.0006910839014340744) (210, 0.0006109184493391782) (220, 0.000526578676516599) (230, 0.00043725020961793957) (240, 0.00033652336113397047) (250, 0.0002024495295761604) (260, 1.2105626610983877e-07) (270, 6.304989972644088e-08) (280, 4.477086687290547e-08) (290, 3.781551741673961e-08) (300, 3.3618397970811944e-08) (310, 3.1083521899489074e-08) (320, 2.9236110190414188e-08) (330, 2.5622926256616787e-08) (340, 2.312381857407127e-08) (350, 2.1479803962685037e-08) (360, 2.0595054079548875e-08) (370, 1.9271640573928068e-08) (380, 1.7091171955714546e-08) (390, 1.4693028733078599e-08) (400, 1.3126911526993429e-08) (410, 1.20864377021389e-08) (420, 1.1235770186059795e-08) (430, 1.0097916250434476e-08) (440, 9.130725645758663e-09) (450, 7.001382953874065e-09) (460, 5.2430750992677145e-09) (470, 2.731534891057994e-09) (480, 9.47901418996539e-10) (490, 3.2847841431256934e-10) (500, 1.7390610461935753e-10) (501, 1.688878554525879e-10) (502, 1.6547423926128117e-10) (503, 1.6324919154655988e-10) (504, 1.603794336356707e-10) (505, 1.582920448818205e-10) (506, 1.5692572188758108e-10) (507, 1.5541437759183333e-10) (508, 1.531913685200446e-10) (509, 1.4776324735494435e-10) (510, 1.357402881967177e-10) (511, 1.0421128964142478e-10) (512, 1.5454012128612604e-11)  
    };
    \addlegendentry{MLP}
    
    \addplot[color=vir4, mark=square] coordinates {
        (0, 1.0) (10, 0.20987068243492948) (20, 0.1789714280141495) (30, 0.1360312958637759) (40, 0.12671180489474262) (50, 0.12213342336573228) (60, 0.10406408860105802) (70, 0.09738126690283455) (80, 0.09602748904971967) (90, 0.09039152271999154) (100, 0.08533781570338798) (110, 0.08394180932197401) (120, 0.08233358785924674) (130, 0.0800823523742912) (140, 0.07779115161176085) (150, 0.07639151731297043) (160, 0.07498133899449783) (170, 0.07420020998956083) (180, 0.07328176888535541) (190, 0.07221667688674273) (200, 0.06945063399782872) (210, 0.06854965067534728) (220, 0.06698588719279792) (230, 0.06550391830313694) (240, 0.06367406529014075) (250, 0.06286973411130088) (260, 0.06182142098572725) (270, 0.06094316692609111) (280, 0.05948790998272106) (290, 0.058283602272073486) (300, 0.05509523257240939) (310, 0.05372745658250891) (320, 0.04799001241894954) (330, 0.04679602360269621) (340, 0.04437638045589388) (350, 0.04109775296493586) (360, 0.03994146820039353) (370, 0.038495298003092884) (380, 0.03755602489567478) (390, 0.03578963335053881) (400, 0.0335000219505972) (410, 0.032516551295919134) (420, 0.014307139559427208) (430, 0.011434200834473293) (440, 0.009747828458090123) (450, 0.0044690646346497555) (460, 0.001363217520236999) (470, 0.00022013686993539497) (480, 6.357896065738972e-07) (489, 2.290390733974232e-09) (490, 1.2728558678777735e-09) (491, 5.832503034529717e-10) (492, 2.1390069181762254e-10) (493, 9.448112159260241e-11)  
    };
    \addlegendentry{UNet}
    
    \addplot[color=vir5, mark=x] coordinates {
        (0, 1.0) (10, 0.011730604492457964) (20, 0.009213665749090424) (30, 0.008935719253296295) (40, 0.008883718838718746) (50, 0.008832183290578518) (60, 0.008766019927134437) (70, 0.00818323373032284) (80, 0.0074261440463286674) (90, 0.007213294394461229) (100, 0.007171909516585396) (110, 0.007136444877588504) (120, 0.006914910949825328) (130, 0.006172441522542972) (140, 0.005799244134739346) (150, 0.005424950992920647) (160, 0.005254910928075411) (170, 0.005228571830080484) (180, 0.005224696685286327) (190, 0.00520848709350312) (200, 0.005028399196976466) (210, 0.0046091202338519365) (220, 0.004431201442220288) (230, 0.004344748623402378) (240, 0.004297647325397021) (250, 0.004294087498845729) (260, 0.004202281712686691) (270, 0.004115277635005418) (280, 0.00408101072631444) (290, 0.00342525546694761) (300, 0.001357229346521154) (310, 0.00034204131415973513) (320, 6.17123222475476e-05) (330, 1.3931036961002134e-05) (340, 3.133721203781303e-06) (350, 6.122537246031866e-07) (360, 1.1207963942771565e-07) (370, 1.761003084158037e-08) (380, 3.499498907288189e-09) (390, 6.533889145553497e-10) (396, 2.3053436322820832e-10) (397, 2.0434278589747513e-10) (398, 1.7940028513831587e-10) (399, 1.5767368692766794e-10) (400, 1.3261719150921442e-10) (401, 1.1268085967461527e-10) (402, 9.639075943217711e-11) 
    };
    \addlegendentry{FNO}
    
    \addplot[color=blue, mark=pentagon] coordinates {
        (0, 1.0) (10, 0.009620116105752065) (20, 0.006642553764718631) (30, 0.0053012097150703565) (40, 0.0044977678789690665) (50, 0.003937793511855631) (60, 0.003504543460360238) (70, 0.003162279465077411) (80, 0.00287944580879148) (90, 0.0026354609442815354) (100, 0.0024234305781964424) (110, 0.0022368138318249616) (120, 0.0020670484464224573) (130, 0.0019116946191673422) (140, 0.0017675861245182536) (150, 0.0016317094638202147) (160, 0.001502777885146766) (170, 0.001380474303979282) (180, 0.001261354200973783) (190, 0.0011426823502089908) (200, 0.0010262876402775223) (210, 0.000907229642778926) (220, 0.0007819855550200552) (230, 0.0006493398839643633) (240, 0.0004997526304644904) (250, 0.00030059317719074486) (260, 4.5184691924434265e-09) (270, 2.3442959008810886e-09) (280, 1.660523356709187e-09) (290, 1.4015274054530073e-09) (300, 1.246063260229793e-09) (310, 1.1525346998388441e-09) (320, 1.084241268934601e-09) (330, 9.518725923860468e-10) (340, 8.590232121438264e-10) (350, 7.978529494937181e-10) (360, 7.648708441407299e-10) (370, 7.161775363417901e-10) (380, 6.356035534641201e-10) (390, 5.444267845233705e-10) (400, 4.850716738724034e-10) (410, 4.470756583873367e-10) (420, 4.1649971977949924e-10) (430, 3.7324677367209856e-10) (440, 3.360582750253797e-10) (450, 2.5727089477026975e-10) (460, 1.9390840265569493e-10) (466, 1.4312322691497687e-10) (467, 1.3707802560630945e-10) (468, 1.270397517564319e-10) (469, 1.1581370345940873e-10) (470, 1.0173869931377159e-10) (471, 9.026693721467517e-11) 
    };
    \addlegendentry{Transolver}
    
    \addplot[color=blue, mark=square*] coordinates {
        (0, 1.0) (10, 0.016065396233571564) (20, 0.010602628062315376) (30, 0.008355229803210568) (40, 0.0070221036954365755) (50, 0.006050286053210053) (60, 0.005311328825250245) (70, 0.004724602466381331) (80, 0.004237589871745568) (90, 0.0038070739019196585) (100, 0.0034247922174580007) (110, 0.003071052172699941) (120, 0.002743743206762041) (130, 0.002437674575584761) (140, 0.0021343913635831985) (150, 0.0018363444663275898) (160, 0.0015234027884399148) (170, 0.0011837752213705248) (180, 0.0007727716706420563) (190, 0.00022559874139235948) (200, 8.417820004104745e-05) (210, 4.553540544106765e-05) (220, 2.9784849819472427e-05) (230, 2.0730139031054965e-05) (240, 1.5120462886073022e-05) (250, 1.1575348441274952e-05) (260, 8.977120658545059e-06) (270, 7.135318339868889e-06) (280, 5.662602560921479e-06) (290, 4.459724621471372e-06) (300, 3.14832961681666e-06) (301, 9.075817962186704e-07) (302, 1.2170790994287304e-07) (303, 1.6913443563364872e-08) (304, 2.2981554788520656e-09) (305, 2.998674131575232e-10) (306, 4.1352482760898374e-11) 
    };
    \addlegendentry{M2NO}
    
    \addplot[color=red, mark=*] coordinates {
        (0, 1.0) (10, 0.02063228139890085) (20, 0.013441714044706987) (30, 0.010019472145053023) (40, 0.007976325830341654) (50, 0.006391050151597992) (60, 0.004995649726415355) (70, 0.0037294785012825494) (80, 0.002456227933913366) (90, 0.00033873274806006177) (100, 1.9468100675288273e-05) (110, 1.1956501759323637e-05) (120, 4.453417062687528e-06) (130, 1.313770236539209e-06) (140, 5.560182411792915e-07) (150, 4.198492660676239e-07) (160, 3.7351242415515094e-07) (170, 2.1435941402820753e-07) (180, 3.567362799299511e-08) (181, 1.9561644187278825e-08) (182, 1.0065769787627398e-08) (183, 3.4628759044588438e-09) (184, 7.31641985681033e-10) (185, 1.9327559481563222e-10) (186, 3.313675469366072e-11)  
    };
    \addlegendentry{NPO}

\addlegendentry{NPO Residuals}

\end{axis}
\end{tikzpicture}
\vspace{-10pt}
\caption{Relative residual convergence comparison of different solvers for the Poisson equation on a $512$ grid.}
\label{fig:poisson_conv}
\end{figure}

\pgfplotsset{compat=1.17}

\begin{figure}[ht]
    \begin{tikzpicture}
    \begin{axis}[
        xlabel=\textsc{Resolution},
        ylabel=\textsc{Iteration},
        ylabel style={yshift=-0.2em},
        xtick={128,1024,2048,4096},
        xticklabels={128,1024,2048,4096},
        ytick={0,2000,4000},
        yticklabels={0,2k,4k},
        grid=major,
        legend pos=north west,
        width=14cm,
        height=6cm,
        legend style={at={(0.5,1.0)}, anchor=south, legend columns=-1, font=\small, draw=none},
    ]
    
    \addplot[color=vir0, mark=diamond] coordinates {
        (128, 129) (256, 257) (512, 513) (1024, 1025) (2048, 2049) (4096,4097)
    };
    \addlegendentry{Jacobi}
    
    \addplot[color=vir1, mark=triangle] coordinates {
        (128, 129) (256, 252) (512, 502) (1024, 1014) (2048, 2013) (4096,4052)
    };
    \addlegendentry{Gauss Seidel}
    
    \addplot[color=vir2, mark=star] coordinates {
        (128, 129) (256, 252) (512, 502) (1024, 1014) (2048, 2013) (4096,4052)
    };
    \addlegendentry{SOR}
    
    \addplot[color=vir3, mark=o] coordinates {
        (128, 129) (256, 257) (512, 511) (1024, 1010) (2048, 1858) (4096, 2922)
    };
    \addlegendentry{MLP}
    
    \addplot[color=vir4, mark=square] coordinates {
        (128, 127) (256, 248) (512, 475) (1024, 951) (2048, 1903) (4096,4097)
    };
    \addlegendentry{UNet}
    
    \addplot[color=vir5, mark=x] coordinates {
        (128, 112) (256, 199) (512, 367) (1024, 698) (2048, 1343) (4096,2617)
    };
    \addlegendentry{FNO}
    
    \addplot[color=blue, mark=pentagon] coordinates {
        (128, 129) (256, 257) (512, 511) (1024, 1013) (2048, 1862) (4096,4097)
    };
    \addlegendentry{Transolver}
    
    \addplot[color=blue, mark=square*] coordinates {
        (128, 87) (256, 232) (512, 430) (1024, 1004) (2048, 2049) (4096,4097)
    };
    \addlegendentry{M2NO}
    
    \addplot[color=red, mark=*] coordinates {
        (128, 61) (256, 135) (512, 263) (1024, 520) (2048, 1022) (4096,1658)
    };
    \addlegendentry{NPO}
    
    \end{axis}
    \end{tikzpicture}
    \vspace{-15pt}
    \caption{Performance comparison of numerical methods across grid resolutions from 128 to 4096.}
    \label{fig:poisson_res}
\end{figure}

\textbf{Detailed Convergence Patterns.}
Figure~\ref{fig:poisson_conv} depicts the residual reduction curves for Poisson on a \(512\times512\) grid. Classical methods (Jacobi, Gauss--Seidel, SOR) remain near the top, requiring hundreds of iterations for modest error reductions. Neural approaches like MLP, U-Net, and FNO descend more quickly but often plateau. Transolver and M2NO (blue circles and squares, respectively) push the residual several orders lower while maintaining moderate iteration counts. NPO (red dots) shows the steepest slope, surpassing other methods’ progress by iteration 100 and driving the residual below \(10^{-8}\) after about 200 iterations, ultimately achieving the lowest final residual.

\textbf{Summary.}
Across uniform and irregular meshes, as well as different PDE types (Poisson, Diffusion, and Linear Elasticity), NPO consistently outperforms classical and existing neural methods. Its ability to approximate \(A^{-1}\) more accurately significantly lowers GMRES’s iteration count and total runtime. As problem sizes and complexities grow, we anticipate NPO’s data-driven advantage to become even more pronounced.

\subsection{Generalization of Resolutions}

To investigate how iteration counts scale with increasing problem size, we trained all methods solely at a $128$ resolution, then tested them on meshes ranging from \(128\) up to \(4096\). Figure~\ref{fig:poisson_res} plots the number of iterations versus resolution, revealing two main trends. First, classical iterative solvers (Jacobi, Gauss--Seidel, and SOR) display a steep climb in iteration counts, exceeding 4000 iterations at the largest grid. While methods such as MLP, FNO, and Transolver exhibit somewhat better scaling, their iteration counts still grow appreciably as the resolution increases.

By contrast, our Neural Preconditioning Operator (NPO) maintains a comparatively moderate increase in iterations despite operating far beyond its original training resolution. This behavior indicates that learning on a relatively small grid can, in practice, yield robust performance on significantly larger domains. Consequently, the data-driven approach underlying NPO demonstrates both adaptability and scalability in real-world scenarios where mesh sizes can vary substantially.

\begin{table}[ht]
\centering
\caption{Ablation Study}
\label{table:ablation}
\renewcommand\arraystretch{0.6}
\begin{sc}
    \renewcommand{\multirowsetup}{\centering}
    \resizebox{0.8\linewidth}{!}{
    \begin{tabular}{l|l|c}
       \toprule
       Type & Configuration & Iteration \\ 
       \midrule
       \multirow{3}{*}{Loss} &  w/o Residual Loss & 189 \\
       & w/o Data Loss & 206 \\
       & w/o Condition Loss & 189 \\
       \midrule
       \multirow{5}{*}{Model} & w/o NAMG & 314 \\
       & w/o A & 227 \\
       & w/o Pre Relaxation & 233 \\
       & w/o Post Relaxation & 233 \\
       & w/o Both Relaxation & 309 \\
       \midrule
       \textbf{Baseline} & \textbf{NPO} & \textbf{184} \\
       \bottomrule
    \end{tabular}}
\end{sc}
\end{table}

\subsection{Model Analysis}

\textbf{Ablation Study (Table~\ref{table:ablation}).}
We analyze the impact of removing key components from the Neural Preconditioning Operator (NPO). Omitting the residual or condition loss slightly increases the iteration count (to 189), while removing the data loss has a larger effect (206 iterations), indicating its importance for matching solution states. Disabling the entire NAMG operator raises the iteration count to 314, demonstrating the loss of preconditioning benefits. Similarly, removing the matrix \(A\) from the input increases iterations to 227, highlighting the importance of system information. Removing pre- or post-processing each adds about 50 iterations (233 vs.\ 184), and removing both degrades performance to 309 iterations. These findings underscore the importance of each component in improving convergence efficiency.

\textbf{Hyperparameter Study (Table~\ref{table:hp}).}
We evaluate how hyperparameters affect performance on the Poisson equation. A feature width of 32 yields the best result (184 iterations), while smaller widths (8, 16) perform slightly worse and larger widths (64, 128) increase iteration counts. A single pre/post-processing pass achieves the optimal result, with additional passes sometimes increasing complexity without improving performance. Using 128 coarse points balances representation and overhead, achieving 184 iterations, while extremes like 8 or 64 lead to over 200 iterations. Similarly, setting 4 attention heads achieves optimal performance; fewer heads underfit, while more heads (8, 16) increase complexity without significant gains.

These studies confirm that NPO’s effectiveness relies on a balance between data, residual, and condition losses, along with carefully tuned hyperparameters, enabling faster convergence across diverse problem instances. More details, including efficiency analysis and model configurations, can be found in Appendix~\ref{appendix:detail}.

\begin{table}[ht]\small
\centering
\caption{Hyperparameter Study on Poisson equation.}
\label{table:hp}
\renewcommand\arraystretch{0.6}
\begin{sc}
    \renewcommand{\multirowsetup}{\centering}
    \resizebox{0.8\linewidth}{!}{
    \begin{tabular}{c|c|c}
       \toprule
       Type & Configuration & Iteration \\ 
       \midrule
       \multirow{5}{*}{Feature Width} & 8 & 226 \\
       & 16 & 224 \\
       & \textbf{32} & \textbf{184} \\
       & 64 & 308 \\
       & 128 & 356 \\
       \midrule
       \multirow{5}{*}{Pre Ite} & \textbf{1} & \textbf{184} \\
       & 2 & 217 \\
       & 3 & 268 \\
       & 4 & 225 \\
       & 5 & 225 \\
       \midrule
       \multirow{5}{*}{Post Ite} & \textbf{1} & \textbf{184} \\
       & 2 & 219 \\
       & 3 & 223 \\
       & 4 & 218 \\
       & 5 & 254 \\
       \midrule
       \multirow{5}{*}{Num C} & 8 & 216 \\
       & 16 & 234 \\
       & 32 & 205 \\
       & 64 & 233 \\
       & \textbf{128} & \textbf{184} \\
       \midrule
       \multirow{5}{*}{Num Heads} & 1 & 229 \\
       & 2 & 251 \\
       & \textbf{4} & \textbf{184} \\
       & 8 & 196 \\
       & 16 & 302 \\
       \bottomrule
    \end{tabular}}
\end{sc}
\end{table}

\section{Conclusion}
We introduced a Neural Preconditioning Operator (NPO) for large-scale PDE solvers, integrating neural attention mechanisms with classical algebraic multigrid (NAMG) to address both high- and low-frequency errors. Theoretical analysis confirms that NAMG inherits two-grid convergence guarantees, ensuring rapid error reduction independent of problem size, while spectral analysis shows eigenvalue clustering near 1 for faster convergence. Extensive experiments demonstrate that NPO outperforms traditional preconditioners and existing neural operators across varying domains, resolutions, and PDE types.

Several promising directions remain open. First, we plan to expand NPO through \textbf{adaptive parameterization}, allowing the framework to cope with rapidly changing PDE coefficients or boundary conditions by leveraging online or continual learning. Another key direction involves \textbf{multilevel neural architectures}, where deeper or more sophisticated attention-based models can be deployed at each grid level for improved coarse-grid representations. We also aim to explore \textbf{parallel and distributed implementations}, bringing NPO into high-performance computing environments to handle extremely large grids efficiently. 

We envision that progress in these areas will broaden the impact of neural operators in multigrid methods, enhancing both efficiency and scalability for next-generation PDE solvers.


\bibliographystyle{plainnat}
\bibliography{reference}

\appendix

\section{Related Work}
\label{appendix:relate}
\subsection{Numerical Preconditioner}

Preconditioning is a well-established numerical technique for accelerating the convergence of iterative solvers applied to large linear systems. It involves applying a matrix transformation to reduce the condition number of the system matrix. Here, we briefly review three key approaches in numerical preconditioning: matrix factorization, matrix reordering, and multilevel methods.

Matrix factorization-based preconditioners are among the most widely used. Simple methods like the Jacobi preconditioner only use the diagonal elements of the matrix, which are fast to compute but provide limited improvement in condition number. More sophisticated approaches, such as the incomplete Cholesky (IC) preconditioner~\cite{06:Numerical}, offer a balance between computational cost and accuracy by partially factorizing the matrix while limiting fill-in. Advances in this area, including dynamic fill-in strategies~\cite{12:Numerical}, improve accuracy but at a higher computational cost.

Matrix reordering techniques~\cite{Liu_book, davis1999modifying} aim to reduce the bandwidth of sparse matrices by reorganizing their structure into block-diagonal forms. This restructuring minimizes fill-in during factorization and enhances parallelization, making it particularly beneficial when combined with other preconditioning methods. Our approach leverages graph-based representations to naturally maintain order invariance and parallelizability.

Multilevel methods, such as the classical multigrid approach~\cite{00:tutorial}, improve scalability by addressing errors at different levels of discretization. These methods are particularly effective for elliptic PDEs, though they face challenges with hyperbolic and parabolic PDEs~\cite{TrotMult2001}. Our Neural Preconditioning Operator (NPO) differs by using data-driven neural operators to approximate the inverse system matrix without being tailored to specific PDE types. Research has also explored hybrid approaches combining multilevel and neural strategies to further enhance solver efficiency~\cite{chen_2021_icsiggraph}.

\subsection{Neural Preconditoner}

Recent advances have explored neural networks for preconditioning linear systems derived from PDEs. Unlike classical preconditioners, neural approaches adaptively improve solver performance by learning data-driven representations of the inverse operator.

Early methods aimed to guarantee convergence through neural approximations of PDE solvers \cite{19:LearningNeural}. Machine learning techniques have also been applied to geophysical fluid dynamics, demonstrating the effectiveness of neural preconditioners in large-scale simulations \cite{20:MachineLearned}. Further extensions hybridize operator learning with traditional relaxation methods for enhanced scalability and accuracy \cite{22:AHybrid}.

Neural networks have been used to accelerate solvers in physics domains, such as lattice gauge theory, by reducing the iteration count in large sparse systems \cite{22:NeuralNetwork}. Preconditioners specifically optimized for conjugate gradient (CG) solvers were introduced by \cite{23:LearningPre}, achieving faster convergence through learned adaptations to PDE structures.

Recent work leverages neural operator frameworks, such as DeepONet and Fourier Neural Operators (FNO), to enhance preconditioning strategies \cite{24:DeepOnet}. Transformer-based architectures have also been explored, incorporating multigrid principles to refine both fine- and coarse-grid error correction \cite{24:MultigridAugmented,24:fcg-no}.

Moreover, specialized applications in porous microstructures and Helmholtz equations highlight how machine-learned preconditioners can integrate compact implicit layers for improved error control and spectral properties \cite{24:Machine}. Collectively, these developments point to the growing importance of neural preconditioners in accelerating PDE solvers across diverse physical and computational settings.

\section{Krylov Subspace Methods}
\label{sec:krylov}
Krylov subspace methods are a class of iterative solvers designed for large-scale linear systems of the form
\begin{equation}
    A\mathbf{x} = \mathbf{b},
\end{equation}
where \(A \in \mathbb{R}^{n \times n}\) is a sparse matrix, \(\mathbf{x}\in \mathbb{R}^n\) is the solution vector, and \(\mathbf{b}\in \mathbb{R}^n\) is the right-hand side (RHS) vector. Starting with an initial guess \(\mathbf{x}_0\), the residual vector is defined as
\begin{equation}
    \mathbf{r}_0 = \mathbf{b} - A\mathbf{x}_0.
\end{equation}
At each iteration, Krylov methods construct a solution within the Krylov subspace, defined by:
\begin{equation}
    \mathcal{K}_m(A, \mathbf{r}_0) = \text{span}\{\mathbf{r}_0,\, A\mathbf{r}_0,\, A^2\mathbf{r}_0,\, \dots,\, A^{m-1}\mathbf{r}_0\}.
\end{equation}

These methods aim to find an approximate solution \(\mathbf{x}_m \in \mathcal{K}_m\) that minimizes the residual norm \(\|\mathbf{b} - A\mathbf{x}_m\|\). Two widely used Krylov methods are Conjugate Gradient (CG) and Generalized Minimal Residual (GMRES).

\paragraph{Conjugate Gradient (CG).} CG is specialized for symmetric positive definite (SPD) matrices. It constructs a series of orthogonal search directions \(\{\mathbf{p}_k\}\) such that each iterate \(\mathbf{x}_{k+1}\) minimizes the quadratic form
\begin{equation}
    \|\mathbf{b} - A\mathbf{x}_{k+1}\|_A^2 = (\mathbf{x}_{k+1} - \mathbf{x})^\top A (\mathbf{x}_{k+1} - \mathbf{x}),
\end{equation}
where \(\| \cdot \|_A\) denotes the \(A\)-norm. CG converges rapidly for well-conditioned systems, often within a number of iterations proportional to the square root of the condition number of \(A\).

\paragraph{Generalized Minimal Residual (GMRES).} GMRES is designed for general non-symmetric matrices. It iteratively constructs an orthonormal basis of the Krylov subspace using the Arnoldi process. At each step, GMRES finds \(\mathbf{x}_m\) that minimizes the residual norm in the Euclidean sense:
\begin{equation}
    \mathbf{x}_m = \arg\min_{\mathbf{x} \in \mathcal{K}_m} \|\mathbf{b} - A\mathbf{x}\|.
\end{equation}
Since the orthonormal basis grows with each iteration, GMRES requires restarts to control memory usage and computational cost. Despite this, it is effective for systems with complex eigenvalue structures.

Both methods benefit significantly from preconditioning, which transforms the system into one with a more favorable spectrum, thereby accelerating convergence.

\section{Proofs of Theorem and Property}
\subsection{Proof of Property \ref{prop:approximation}} \label{appendix:proof_1}
\begin{proof}[Proof of Property~\ref{prop:approximation}]
Assume \(A\in \mathbb{R}^{n\times n}\) is symmetric positive-definite (SPD), and let \(\|\mathbf{v}\|_{a}^2 = \mathbf{v}^\top A\,\mathbf{v}\). In classical multigrid, one typically constructs \(P\) so that each “smooth” (low-frequency) error in \(\mathbb{R}^n\) lies close to the range of \(P\). Concretely:

Often, the residual or error vector \(\mathbf{e}\) is relatively smooth if it has passed through a smoothing step (e.g., Gauss–Seidel). In finite-element or finite-difference contexts, “smooth” means \(\mathbf{e}\) varies slowly across elements or grid points.

Define \(\mathbf{z}^c = R\,\mathbf{e}\), where \(R\) is often taken as \(P^\top\) (for SPD problems) or a similar restriction operator. Then \(\mathbf{e} - P\,\mathbf{z}^c = \mathbf{e} - P\,R\,\mathbf{e}\). By design, \(P\) and \(R\) capture low-frequency components of \(\mathbf{e}\) well.

One shows
\[
    \|\mathbf{e} - P\,R\,\mathbf{e}\|_{a}
    \;\le\;
    \alpha\,\|\mathbf{e}\|_{a},
\]
for some \(\alpha < 1\), relying on local interpolation or stable decomposition arguments. Essentially, \(P\,R\) acts like a “best fit” in a coarse subspace spanned by columns of \(P\).

Since \(\|\mathbf{e} - P\,\mathbf{z}^c\|_{a}\) achieves the same bound by choosing \(\mathbf{z}^c = R\,\mathbf{e}\), it follows that
\[
    \min_{\mathbf{z}^c}\,\|\mathbf{e} - P\,\mathbf{z}^c\|_{a}
    \;\le\;
    \|\mathbf{e} - P\,R\,\mathbf{e}\|_{a}
    \;\le\;
    \alpha\,\|\mathbf{e}\|_{a}.
\]

Hence, the constructed prolongation \(P\) ensures that any smooth error \(\mathbf{e}\) can be approximated to within a factor \(\alpha\) in the \(\|\cdot\|_{a}\)-norm by some coarse representation \(\mathbf{z}^c\). This property is crucial for two-grid and multigrid convergence theory, as it guarantees low-frequency error components are effectively handled on the coarse grid.
\end{proof}

\subsection{Proof of Theorem \ref{th:twogrid_convergence}} \label{appendix:proof_2}
\begin{proof}[Proof of Theorem~\ref{th:twogrid_convergence}]

In a two-grid iteration, the error \(\mathbf{e}^{(k)}\) is first \emph{smoothed} using a relaxation method (e.g., Gauss--Seidel). This smoothing operator, denoted by \(S\), substantially reduces high-frequency components of the error. After smoothing, the dominant error components in \(\mathbf{e}^{(k)}\) lie in lower-frequency ranges.

Next, the residual \(\mathbf{r}^{(k)} = \mathbf{b} - A \mathbf{x}^{(k)}\) is transferred to a coarse space via a restriction operator \(R\). We solve or approximate the system on the coarse grid, then prolong the coarse correction back to the fine grid with a prolongation operator \(P\). This step primarily targets low-frequency components of the error.

By the \emph{Approximation Property} (Property~\ref{prop:approximation}), the coarse space captures smooth (low-frequency) errors up to a factor \(\alpha < 1\). Concretely, we can write
\[
    \|\mathbf{e}^{(k)} - P\,R\,\mathbf{e}^{(k)}\|_{a}
    \;\le\;
    \alpha \,\|\mathbf{e}^{(k)}\|_{a}.
\]
Combining the smoothing effect for high-frequency errors with the coarse-grid correction for low-frequency errors yields a uniform reduction of the entire error \(\mathbf{e}^{(k)}\).

Let \(\widetilde{\mathbf{e}}^{(k)}\) be the error after smoothing, and \(\widehat{\mathbf{e}}^{(k)}\) be the error after coarse correction. We have:
\[
    \|\widetilde{\mathbf{e}}^{(k)}\|_{a} 
    \;\le\; 
    \nu \,\|\mathbf{e}^{(k)}\|_{a}
    \quad
    \text{(smoothing bound for high-frequency errors)},
\]
for some \(\nu < 1\). Then, applying the coarse correction and using the Approximation Property for low-frequency errors,
\[
    \|\widehat{\mathbf{e}}^{(k)}\|_{a} 
    \;\le\;
    \alpha \,\|\widetilde{\mathbf{e}}^{(k)}\|_{a}
    \;\le\;
    \alpha\,\nu \,\|\mathbf{e}^{(k)}\|_{a}.
\]
Thus, if \(\rho = \alpha\,\nu\), we get
\[
    \|\mathbf{e}^{(k+1)}\|_{a}
    \;=\;
    \|\widehat{\mathbf{e}}^{(k)}\|_{a}
    \;\le\;
    \rho\, \|\mathbf{e}^{(k)}\|_{a},
\]
and \(\rho < 1\).

Since both \(\nu\) (smoothing factor) and \(\alpha\) (coarse approximation factor) do not depend on the number of degrees of freedom \(n\), the convergence rate \(\rho\) remains below 1 \emph{independently of} \(n\). Consequently, each two-grid cycle contracts the error by at least a factor \(\rho\), implying a convergence rate that is uniform with respect to problem size.

By combining stable smoothing (which tackles high-frequency errors) with an effective coarse space approximation (which addresses low-frequency errors), the two-grid algorithm achieves a uniform reduction in the energy norm at each iteration, completing the proof.
\end{proof}

\subsection{Proof of Theorem \ref{th:spectrum_clustering}} \label{appendix:proof_3}
\begin{proof}[Proof of Theorem~\ref{th:spectrum_clustering}]
Let \(M\) be a preconditioner satisfying the two essential multigrid conditions:
\begin{itemize}
    \item \textbf{Smoothing Property:} A smoothing operator \(S\) reduces high-frequency error components effectively.
    \item \textbf{Coarse Approximation (Approximation Property):} A coarse space captures low-frequency errors up to a bounded factor.
\end{itemize}

Any error vector \(\mathbf{e}\) can be split into high-frequency and low-frequency parts. The smoothing property guarantees a uniform reduction of high-frequency modes, while the coarse approximation ensures that low-frequency errors are corrected by the coarse-grid solution.

Because \(A\) is symmetric positive-definite (SPD), we have \(\mathbf{v}^\top A\,\mathbf{v} > 0\) for all nonzero \(\mathbf{v}\). By design, \(M\approx A^{-1}\) in the sense that high-frequency components are rapidly damped and low-frequency components are accurately corrected. Thus, when we consider the generalized eigenvalue problem
\[
    MA\,\mathbf{x} = \lambda \mathbf{x},
\]
the spectrum of \(MA\) must lie within an interval \([\lambda_{\min}, \lambda_{\max}]\) around 1, provided the smoothing and coarse-grid conditions hold. 

Standard multigrid analysis (see \cite{00:tutorial}) shows that these two-grid assumptions induce a tight cluster of eigenvalues around 1. In particular, repeated smoothing and accurate coarse-grid corrections force the effective operator \(MA\) to act almost like the identity, i.e., \(MA \approx I\). This implies that every eigenvalue \(\lambda\) of \(MA\) is close to 1, say \( \lambda_{\min} \le \lambda \le \lambda_{\max} \), with both \(\lambda_{\min}, \lambda_{\max}\) near 1.

Since 
\[
    \kappa(MA) = \frac{\lambda_{\max}}{\lambda_{\min}},
\]
the close proximity of \(\lambda_{\min}\) and \(\lambda_{\max}\) to 1 ensures that \(\kappa(MA)\approx 1\). This near-identity condition number leads to rapid convergence in Krylov methods (such as CG and GMRES), which require fewer iterations when eigenvalues are well clustered.

Hence, under the smoothing and coarse approximation assumptions, \(\lambda_{\min}, \lambda_{\max}\) lie near 1, yielding a small condition number \(\kappa(MA)\) and guaranteeing that Krylov solvers converge rapidly.
\end{proof}

\subsection{Proof of Theorem \ref{th:integral}} \label{appendix:proof_4}
Recall that the NAMG operator applies a graph-based attention mechanism to approximate an integral over the spatial domain \(\Omega\). The theorem is established by demonstrating that the graph attention mechanism employed in the NAMG operator can be formalized as a Monte-Carlo approximation of an integral operator \cite{21:Choose, 23:no, 24:Transolver,24:AMG}.

\begin{proof}[Proof of Theorem~\ref{th:integral}]
Let \(a:\Omega \to \mathbb{R}^d\) be an input function, and let \(\mathbf{x}\in \Omega\subset\mathbb{R}^d\). Our goal is to show that
\[
    \mathcal{G}a(\mathbf{x})
    \;=\;
    \int_{\Omega} \kappa(\mathbf{x}, \boldsymbol{\xi})\,a(\boldsymbol{\xi})
    \,d\boldsymbol{\xi}
\]
can be approximated by the NAMG attention update.

Define the kernel \(\kappa(\mathbf{x},\boldsymbol{\xi})\) to measure similarity between \(\mathbf{x}\) and \(\boldsymbol{\xi}\). In a continuous setting, \(\mathcal{G}a(\mathbf{x})\) integrates over all \(\boldsymbol{\xi}\in\Omega\).

We discretize \(\Omega\) into a mesh or graph with nodes \(\{\mathbf{x}_i\}\). Each node \(\mathbf{x}_i\) represents a sample in \(\Omega\). The adjacency structure \(A\) (or neighborhood set \(\mathcal{N}(\mathbf{x})\)) reflects local connectivity.

Attention weights 
\[
    \alpha_{i} 
    \;\propto\; 
    \exp\!\Bigl(\mathbf{a}^\top [\mathbf{W}a(\mathbf{x}) \;\|\; \mathbf{W}a(\mathbf{x}_i)]\Bigr)
\]
approximate the continuous kernel \(\kappa(\mathbf{x}, \mathbf{x}_i)\). Normalizing by the softmax denominator
\[
    \sum_{j \in \mathcal{N}(\mathbf{x})}
    \exp\!\Bigl(\mathbf{a}^\top [\mathbf{W}a(\mathbf{x}) \;\|\; \mathbf{W}a(\mathbf{x}_j)]\Bigr)
\]
yields a discrete approximation to the integral \(\int_{\Omega} \kappa(\mathbf{x}, \boldsymbol{\xi})\,d\boldsymbol{\xi}\). Thus, summing over neighbors \(\mathbf{x}_i \in \mathcal{N}(\mathbf{x})\) corresponds to sampling from \(\Omega\).

Replacing \(\kappa(\mathbf{x}, \boldsymbol{\xi})\) with the above attention weights and summing over \(\mathcal{N}(\mathbf{x})\), we obtain
\[
    \mathcal{G}a(\mathbf{x})
    \;\approx\;
    \sum_{i \in \mathcal{N}(\mathbf{x})} 
    \alpha_{i} \,\mathbf{W}a(\mathbf{x}_i),
\]
which matches the NAMG attention update rule. Hence, the NAMG operator realizes a Monte Carlo approximation to \(\int_{\Omega} \kappa(\mathbf{x}, \boldsymbol{\xi})\,a(\boldsymbol{\xi})\,d\boldsymbol{\xi}\), where \(\alpha_i\) and \(\mathbf{W}\) are learned parameters.

Therefore, the attention-based aggregation in NAMG acts as a learnable integral operator over \(\Omega\), completing the proof.
\end{proof}

\section{Model Details}
\label{appendix:detail}
\begin{table}[ht]\small
\centering
\caption{Model Efficiency Comparison. Efficiency is evaluated on inputs with size S × S during the training phase.}
\label{table:eff}
\begin{sc}
    \renewcommand{\multirowsetup}{\centering}
    \resizebox{0.98\linewidth}{!}{
    \begin{tabular}{c|c|cc|c}
       \toprule
       Model & Matrix Size & Parameter (MB) & GPU Memory (MB) & Running Time (s) \\ 
       \midrule
       \multirow{4}{*}{NPO} & 512 & 0.14 & 4996 & 0.0121 \\
       & 1024 & 0.14 & 18576 & 0.0121 \\
       & 2048 & 0.14 & 71488 & 0.0123  \\
       & 4096 & 0.14 & 280320 & 0.0123 \\
       \midrule
       \multirow{3}{*}{M2NO} & 1024 & 11.92 & 367324.5 & 0.0783 \\
       & 2048 & 11.92 & 1409140.5 & 0.0797 \\
       & 4096 & 11.92 & 5557204.5 & 0.0814 \\
       \midrule
       \multirow{4}{*}{Transolver} & 512 & 1.42 & 18372 & 0.0200 \\
       & 1024 & 1.42 & 72592 & 0.0201 \\
       & 2048 & 1.42 & 288576 & 0.0206 \\
       & 4096 & 1.42 & 1150720 & 0.0218 \\
       \midrule
       \multirow{4}{*}{FNO} & 512 & 0.03 & 834 & 0.0095 \\
       & 1024 & 0.03 & 3336 & 0.0095 \\
       & 2048 & 0.03 & 13344 & 0.0105 \\
       & 4096 & 0.03 & 53376 & 0.0107 \\
       \midrule
       \multirow{4}{*}{U-Net} & 512 & 164.97 & 15362 & 0.0176 \\
       & 1024 & 164.97 & 61448 & 0.0213 \\
       & 2048 & 164.97 & 245792 & 0.0213 \\
       & 4096 & 164.97 & 983168 & 0.0284 \\
       \midrule
       \multirow{4}{*}{MLP} & 512 & 0.19 & 2050 & 0.0039 \\
       & 1024 & 0.19 & 8200 & 0.0040 \\
       & 2048 & 0.19 & 32800 & 0.0041 \\
       & 4096 & 0.19 & 131200 & 0.0043 \\
       \bottomrule
    \end{tabular}}
\end{sc}
\end{table}
\subsection{Efficiency Analysis}
\label{subsec:efficiency}

Efficiency in computational models is crucial, especially when dealing with large-scale problems such as those encountered in solving PDEs. Here, we focus on the efficiency metrics for the Neural Preconditioning Operator (NPO), comparing it against other models based on parameter size, GPU memory usage, and execution time during the training phase. A detailed comparison is provided in Table~\ref{table:eff}.

NPO demonstrates exceptional efficiency across various matrix sizes, maintaining a low parameter size (0.14 MB) and consistent execution times (around 0.0121 to 0.0123 seconds), regardless of the input scale. Notably, despite the increasing matrix sizes from 512 to 4096, NPO's GPU memory usage grows predictably without disproportionate spikes, which is crucial for scalable applications.

In contrast, other models such as M2NO and U-Net show significant increases in GPU memory demands and slower execution times as matrix sizes grow. M2NO, for instance, uses up to 5.55 million MB of GPU memory for the largest matrix size, with longer running times that reach up to 0.0814 seconds. This reflects a substantial computational overhead compared to NPO.

FNO and MLP, while smaller in parameter size, do not match NPO in terms of balancing the execution speed and memory efficiency at higher matrix dimensions. FNO offers the fastest execution times among the competitors but does not provide the robustness in feature representation that NPO does.

Overall, NPO not only excels in handling larger matrices efficiently but also showcases a balanced profile in terms of memory usage and computational speed, making it particularly suitable for real-world applications where both accuracy and efficiency are paramount.

\subsection{Model Configurations}
The primary configurations for Nerual Preconditioning Operator (NPO) are detailed in Table \ref{table:config}. Except where specifically noted, model parameters remain consistent across different hyperparameters and resolutions within the same benchmark.

\begin{table}[ht]\small
\centering
\caption{Model Configurations.}
\label{table:config}
\begin{sc}
    \renewcommand{\multirowsetup}{\centering}
    \resizebox{0.8\linewidth}{!}{
    \begin{tabular}{c|ccc}
       \toprule
       Modules & Poisson & Diffusion & Linear Elasticity \\ 
       \midrule
       feature\_width & 32 & 32 & 32 \\
       pre\_ite & 1 & 1 & 1 \\
       post\_ite & 1 & 1 & 1 \\
       act & 'relu' & 'relu' & 'relu' \\
       num\_c & 128 & 128 & 128 \\
       num\_heads & 4 & 4 & 4 \\
       \bottomrule
    \end{tabular}}
\end{sc}
\end{table}

\end{document}